\newtheorem{theorem}{Theorem}
\newtheorem{proposition}{Proposition}
\newtheorem{definition}{Definition}
\def\qed{\ifvmode\Realemovelastskip\fi
{\unskip\nobreak\hfil\penalty50\hbox{}\nobreak\hfil \hbox{\vrule
height1.2ex width1.2ex}\parfillskip=0pt \finalhyphendemerits=0
\par\smallskip}}
\def\qedr{\ifvmode\Realemovelastskip\fi
{\unskip\nobreak\hfil\penalty50\hbox{}\nobreak\hfil \hbox{
$\diamond$}\parfillskip=0pt \finalhyphendemerits=0
\par\smallskip}}
\def\ds{\displaystyle}
\newenvironment{proof}{\noindent{\sl Proof:~~~}}{\quad \qed}
\def\beq{\begin{equation}}
\def\eeq{\end{equation}}
\def\bea{\begin{eqnarray}}
\def\eea{\end{eqnarray}}
\def\beann{\begin{eqnarray*}}
\def\eeann{\end{eqnarray*}}
\def\beasn{\begin{sneqnarray}}
\def\eeasn{\end{sneqnarray}}
\def\ben{\begin{enumerate}}
\def\een{\end{enumerate}}
\def\bit{\begin{itemize}}
\def\eit{\end{itemize}}
\def\derpar#1#2{\displaystyle\frac{\partial{#1}}{\partial{#2}}}
\def\derpars#1#2#3{\displaystyle\frac{\partial^2{#1}}{\partial{#2}\partial{#3}}}
\def\restric#1#2{\left.#1\right|_{#2}}
\def\W{{\cal W}}
\def\C{{\cal C}}
\def\P{{\cal P}}
\def\vf{{\mathfrak{X}}}
\def\df{{\mit\Omega}}
\def\Lag{{\cal L}}
\def\Leg{{\cal FL}}
\def\d{{\rm d}}
\def\Nat{\mathbb{N}}
\def\Real{\mathbb{R}}
\def\R{\mathbb{R}}
\def\Tan{{\rm T}}
\def\Lie{\mathop{\rm L}\nolimits}
\def\inn{\mathop{i}\nolimits}
\def\Cinfty{{\rm C}^\infty}
\def\tabaddress#1{{\small\it\begin{tabular}[t]{c}#1
\\[1.2ex]\end{tabular}}}
\def\qed{\ifvmode\removelastskip\fi
{\unskip\nobreak\hfil\penalty50\hbox{}\nobreak\hfil \hbox{\vrule
height1.2ex width1.2ex}\parfillskip=0pt \finalhyphendemerits=0
\par\smallskip}}
\title{HIGHER-ORDER MECHANICS: VARIATIONAL PRINCIPLES AND OTHER TOPICS}
\author{
{\sc  Pedro Daniel Prieto-Mart\'\i nez\thanks{{\bf e}-{\it mail}:
   peredaniel@ma4.upc.edu} }\\
   {\sc Narciso Rom\'an-Roy\thanks{{\bf e}-{\it mail}:
   nrr@ma4.upc.edu}}  \\
   \tabaddress{Departamento de Matem\'atica Aplicada IV.
   Edificio C-3, Campus Norte UPC\\
   C/ Jordi Girona 1. 08034 Barcelona. Spain}
}
   \date{December 18, 2012}
\begin{document}

\maketitle

\pagestyle{myheadings}

\thispagestyle{empty}

\begin{abstract}
After reviewing the Lagrangian-Hamiltonian unified formalism (i.e,  the Skinner-Rusk formalism)
for higher-order (non-autonomous) dynamical systems,
we state a unified geometrical version of the Variational Principles 
which allows us  to derive the
Lagrangian and Hamiltonian equations for these kinds of systems.
Then, the standard Lagrangian and Hamiltonian formulations of these principles and the corresponding
dynamical equations are recovered from this unified framework.
\end{abstract}

\bigskip
\noindent {\bf Key words}:
{\sl Higher-order non-autonomous systems Variational principles, Unified, Lagrangian and Hamiltonian formalisms}

\vbox{\raggedleft AMS s.\,c.\,(2010): 35A15 \and 37B55 \and 70H50}\null 
\markright{\rm P.D. Prieto-Mart\'{\i}nez, N. Rom\'{a}n-Roy:   \sl Higher-order Mechanics: Variational Principles...}

\clearpage

% \tableofcontents

%%%%%%%%%%%%%%%%%%%%%%%%%%%%%%%%%%%%%%%%%%%%%%%%%%
%%%%%%%%%%%%%%%%%%%%%%%%%%%%%%%%%%%%%%%%%%%%%%%%%%
\section{Introduction}
\label{intro}
%%%%%%%%%%%%%%%%%%%%%%%%%%%%%%%%%%%%%%%%%%%%%%%%%%
%%%%%%%%%%%%%%%%%%%%%%%%%%%%%%%%%%%%%%%%%%%%%%%%%%

Higher-order systems appear in many models in
theoretical and mathematical physics, such as
in the mathematical description of relativistic particles with spin, string theories,
gravitation, Podolsky's generalization of electromagnetism and others.
They also appear
in some problems of fluid mechanics, electric networks and classical physics, and 
in numerical models arising from the geometric discretization of first-order dynamical systems
that preserve their inherent geometric structures
(see \cite{art:Prieto_Roman11_1,art:Prieto_Roman11_2}
for a long but non-exhaustive list of references).

In these kinds of systems and, in particular, in higher-order mechanics,
the dynamics have explicit dependence on accelerations or higher-order derivatives 
of the generalized coordinates of position.
So, for Lagrangian systems, if the Lagrangian function depends on derivatives of order $k$,
the corresponding Euler-Lagrange equations are of order $2k$.
Thus, the geometric descriptions of these systems use higher-order tangent and jet bundles
as the main tool (see, for instance, 
\cite{proc:Cantrijn_Crampin_Sarlet86,book:DeLeon_Rodrigues85,proc:DeLeon_Rodrigues87,art:Gracia_Pons_Roman91,art:Gracia_Pons_Roman92,art:Krupkova00,book:Saunders89}).

Furthermore, a generalization of the Lagrangian and Hamiltonian formalisms
of first-order autonomous mechanical systems exists  that compresses them
into a single formalism: the {\sl Skinner-Rusk} or {\sl Lagrangian-Hamiltonian unified formalism},
proposed by R. Skinner and R. Rusk 
for first-order autonomous mechanical systems \cite{art:Skinner_Rusk83}. 
It was generalized to non-autonomous dynamical systems,
control systems, first-order classical field theories 
and higher-order classical field theories
\cite{art:Barbero_Echeverria_Martin_Munoz_Roman08,art:Barbero_Echeverria_Martin_Munoz_Roman07,art:Cortes_Martinez_Cantrijn02,art:Echeverria_Lopez_Marin_Munoz_Roman04}.
The generalization of the Skinner-Rusk unified  formalism for higher-order mechanical systems
has been developed in recent papers
\cite{art:Colombo_Martin_Zuccalli10,art:Prieto_Roman11_1,art:Prieto_Roman11_2}.

The aim of this lecture is twofold:
first to review this unified formalism for higher-order mechanical systems and
second to state the variational principles for higher-order systems 
and derive the higher-order Euler-Lagrange and Hamilton equations
using this unified framework.
These geometric variational techniques are based on those introduced for for first-order field theories 
in \cite{art:Aldaya_Azcarraga78,GC-74,proc:Garcia_Munoz83,art:Goldschmidt_Sternberg73}.
Our study is made for non-autonomous higher-order mechanical systems
(the autonomous case can thereby be obtained by using trivial bundles
and removing the time-dependence).

In particular, we start by introducing some basic geometrical background
in Section \ref{geomset}, and then reviewing the construction of the
framework for the unified Lagrangian-Hamiltonian formalism for non-autonomous higher-order systems
(developed in \cite{art:Prieto_Roman11_2}) in Section \ref{laghamunif}.
The main contributions of the paper begin in Section \ref{varprincuf},
where we establish the variational principle and use it to derive
the higher-order equations for the Lagrangian-Hamiltonian unified formalism,
which are written in several equivalent geometric ways.
Then, in Section \ref{audeq}, these equations are analyzed in detail,
showing how they compress not only the dynamical evolution equations but also
the equations of the Legendre-Ostrogradsky transformation connecting the 
Lagrangian and Hamiltonian formalism, which appear as compatibility 
and consistency conditions for the equations.
Other relevant results are presented in Sections \ref{Lagvp} and \ref{Hamvp},
where first we recover the generalization to higher-order systems
of the {\sl Hamilton Variational Principle} of the Lagrangian
formalism and the {\sl Hamilton-Jacobi Variational Principle} of the Hamiltonian
formalism, and then the higher-order Euler-Lagrange and the Hamilton equations.
All these results are obtained in a straightforward way from this unified formalism.
Finally, some conclusions and further research on these topics
are discussed in Section \ref{concl}.

All the manifolds are real, second countable and $\Cinfty$.
The maps and the structures are assumed to be $\Cinfty$. 
Sum over repeated indices is understood.

%%%%%%%%%%%%%%%%%%%%%%%%%%%%%%%%%%%%%%%%%%%%%%%%%%
%%%%%%%%%%%%%%%%%%%%%%%%%%%%%%%%%%%%%%%%%%%%%%%%%%
\section{Higher-order jet bundles over $\R$}
\label{geomset}
%%%%%%%%%%%%%%%%%%%%%%%%%%%%%%%%%%%%%%%%%%%%%%%%%%
%%%%%%%%%%%%%%%%%%%%%%%%%%%%%%%%%%%%%%%%%%%%%%%%%%

(See \cite{book:Saunders89} for details on jet bundles and \cite{art:Prieto_Roman11_2} for details
and proofs on the unified formalism).

Let $E \stackrel{\pi}{\longrightarrow} \R$ be a fiber bundle with $\dim E = n+1$,
and let $\eta \in \df^1(\R)$ be the canonical volume form in $\R$.
If $k\in\Nat$, the \textsl{$k$th order jet bundle} of the projection
$\pi$, $J^{k}\pi$, is the $((k+1)n+1)$-dimensional
manifold of the $k$-jets of sections $\phi \in \Gamma(\pi)$.
A point in $J^{k}\pi$ is denoted by $j^{k}\phi$, where $\phi \in \Gamma(\pi)$
is any representative of the equivalence class.
We have the following natural projections: if $r \leqslant k$,
$$
\begin{array}{rcl}
\pi^k_r \colon J^{k}\pi & \longrightarrow & J^r\pi \\
j^k\phi & \longmapsto & j^r\phi
\end{array} \quad , \quad
\begin{array}{rcl}
\pi^k \colon  J^{k}\pi & \longrightarrow & E \\
j^k\phi & \longmapsto & \phi
\end{array} \quad , \quad
\begin{array}{rcl}
\bar\pi^k= \pi \circ \pi^k \colon  J^{k}\pi & \longrightarrow & \Real \\
j^k\phi & \longmapsto & \pi(\phi)
\end{array} \, .
$$
Notice that $\pi^{k}_{0} = \pi^{k}$, where $J^0\pi$ is canonically identified with
$E$, and $\pi^{k}_{k} = {\rm Id}_{J^{k}\pi}$.
Furthermore, if $\phi \in \Gamma(\pi)$ is a section of $\pi$, we also denote
the canonical lifting of $\phi$ to $J^{k}\pi$ by
$j^{k}\phi \in \Gamma(\bar{\pi}^{k})$.

Let $t$ be the global coordinate in $\R$ such that $\eta = \d t$, and $(t,q^A_0)$,
$1 \leqslant A \leqslant n$,
local coordinates in $E$ adapted to the bundle structure.
Then, natural coordinates in $J^{k}\pi$ are $(t,q_0^A,q_1^A,\ldots,q_k^A)\equiv (t,q_i^A)$, with
$\ds q_0^A = \phi^A , \ q_i^A = \frac{d^i\phi^A}{dt^i}$.
Using these coordinates, the local expressions of the natural projections are
$$
\pi^k_r(t,q_i^A) = (t,q_j^A) \quad , \quad
\pi^k(t,q_i^A) = (t,q_0^A) \quad , \quad
\bar{\pi}^k(t,q_i^A) = (t) \ .
$$

A section $\psi \in \Gamma(\bar{\pi}^k)$ is {\sl holonomic of type $r$},
 $1 \leqslant r \leqslant k$,
if $j^{k-r+1}\phi = \pi^{k}_{k-r+1} \circ \psi$, 
where $\phi = \pi^{k} \circ \psi \in \Gamma(\pi)$;
that is, the section $\psi$ is the lifting of a section of $\pi$ to 
$J^{k-r+1}\pi$.
In particular, a section $\psi$ is {\sl holonomic of type 1}
if $j^{k}\phi = \psi$; that is,
$\psi$ is the canonical $k$-jet lifting of a section $\phi \in \Gamma(\pi)$,
where $\phi = \pi^{k} \circ \psi$.
A vector field $X \in \vf(J^k\pi)$ is a {\sl semispray of type $r$}
if every integral section of $X$ is holonomic of type $r$.
Throughout this paper, sections that are
holonomic of type $1$ are simply called {\sl holonomic}.
$$
\xymatrix{
\ & \ & J^k\pi \ar[d]_{\pi^k_{k-r+1}} \ar@/^2.5pc/[ddd]^{\pi^k} \\
\R \ar@/^1.5pc/[urr]^{\psi} \ar@/_1.5pc/[ddrr]_{\phi = \pi^k\circ\psi}
\ar[rr]^-{\pi^k_{k-r+1}\circ\psi} \ar[drr]_{j^{k-r+1}\phi} 
& \ & J^{k-r+1}\pi \ar[d]_{{\rm Id}} \\
\ & \ & J^{k-r+1}\pi \ar[d]_{\pi^{k-r+1}} \\
\ & \ & E
}
$$
In natural coordinates, the local expression of a holonomic section of type $r$ is
$$
\psi(t) = (t,q_0^A,q_1^A,\ldots,q_{k-r+1}^A,\psi_{k-r+2}^A,\ldots,\psi_k^A) \ .
$$
Thus, the local expression of a semispray of type $r$ is
$$
X = \derpar{}{t} + q_1^A\derpar{}{q_0^A} + \ldots + q_{k-r+1}^A\derpar{}{q_{k-r}^A}
+ X_{k-r+1}^A\derpar{}{q_{k-r+1}^A} + \ldots + X_k^A\derpar{}{q_k^A} \ .
$$

%%%%%%%%%%%%%%%%%%%%%%%%%%%%%%%%%%%%%%%%%%%%%%%%%%
%%%%%%%%%%%%%%%%%%%%%%%%%%%%%%%%%%%%%%%%%%%%%%%%%%
\section{Lagrangian-Hamiltonian unified formalism}
\label{laghamunif}
%%%%%%%%%%%%%%%%%%%%%%%%%%%%%%%%%%%%%%%%%%%%%%%%%%
%%%%%%%%%%%%%%%%%%%%%%%%%%%%%%%%%%%%%%%%%%%%%%%%%%

Let $\pi\colon E \to\R$ be the configuration bundle
of a $k$th order dynamical system, with $\dim E = n+1$.
The  {\sl higher-order extended jet-momentum bundle} and
the {\sl higher-order restricted jet-momentum bundle} are
$$
\W = J^{2k-1}\pi \times_{J^{k-1}\pi} \Tan^*(J^{k-1}\pi) \quad ; \quad 
\W_r = J^{2k-1}\pi \times_{J^{k-1}\pi} \, J^{k-1}\pi^* \, ,
$$
where $J^{k-1}\pi^* = \Tan^*(J^{k-1}\pi)/(\bar{\pi}^{k-1})^*\Tan^*\R$.

(Observe that $\dim\Tan^*(J^{k-1}\pi) = 2kn+2 > 2kn+1 = \dim J^{2k-1}\pi=\dim J^{k-1}\pi^*$).

These bundles are endowed with the canonical projections
\begin{align*}
\rho_1 \colon \W \to J^{2k-1}\pi \quad &; \quad \rho_2 \colon \W \to \Tan^*(J^{k-1}\pi) \\
\rho_{J^{k-1}\pi} \colon \W \to J^{k-1}\pi \quad &; \quad \rho_\R \colon \W \to \R \\
\rho_1^r \colon \W_r \to J^{2k-1}\pi \quad &; \quad \hat{\rho}_2^r \colon \W_r \to J^{k-1}\pi^* \\
\rho_{J^{k-1}\pi}^r \colon \W_r \to J^{k-1}\pi \quad &; \quad \rho_\R^r \colon \W_r \to \R \ .
\end{align*}
In addition, the natural quotient map  $\mu \colon \Tan^*(J^{k-1}\pi) \to J^{k-1}\pi^*$ 
induces a natural projection $\mu_\W \colon \W \to \W_r$.
$$
\xymatrix{
\ & \ & \W \ar@/_1.3pc/[llddd]_{\rho_1} \ar[d]^-{\mu_\W} \ar@/^1.3pc/[rrdd]^{\rho_2} & \ & \ \\
\ & \ & \W_r \ar[lldd]_{\rho_1^r} \ar[rrdd]^{\hat{\rho}_2^r} & \ & \ \\
\ & \ & \ & \ & \Tan^*(J^{k-1}\pi) \ar[d]^-{\mu} \ar[lldd]_{\pi_{J^{k-1}\pi}}|(.25){\hole} \\
J^{2k-1}\pi \ar[rrd]^{\pi^{2k-1}_{k-1}} & \ & \ & \ & J^{k-1}\pi^* \ar[dll]^{\pi_{J^{k-1}\pi}^r} \\
\ & \ & J^{k-1}\pi \ar[d]^{\bar{\pi}^{k-1}} & \ & \ \\
\ & \ & \R & \ & \
}
$$

If $(t,q_0^A)$
are local coordinates in $E$ adapted to the bundle structure,
the induced coordinates in all these bundles are
\begin{align*}
&J^{2k-1}\pi \colon (t,q_0^A,\ldots,q_{k-1}^A,q_{k}^A,\ldots,q_{2k-1}^A)
 \equiv (t,q_i^A,q_j^A).\\
&\Tan^*(J^{k-1}\pi) \colon (t,q_0^A,\ldots,q_{k-1}^A,p,p_A^0,\ldots,p_A^{k-1})
\equiv(t,q_i^A,p,p_A^i).\\
&J^{k-1}\pi^* \colon (t,q_0^A,\ldots,q_{k-1}^A,p_A^0,\ldots,p_A^{k-1})
\equiv (t,q_i^A,p^i_A).\\
&\W \colon (t,q_0^A,\ldots,q_{k-1}^A,q_k^A,\ldots,q_{2k-1}^A,p,p_A^0,\ldots,p_A^{k-1})
\equiv (t,q_i^A,q_j^A,p,p_A^i).\\
&\W_r \colon (t,q_0^A,\ldots,q_{k-1}^A,q_k^A,\ldots,q_{2k-1}^A,p_A^0,\ldots,p_A^{k-1})
\equiv(t,q_i^A,q_j^A,p_A^i).
\end{align*}

{\bf Remark:}
The last coordinates are the real dynamical variables and so
$\W_r$ is the real phase space in the unified formalism.

Observe that $\dim\,\W = 3kn + 2$ and $\dim\,\W_r = 3kn + 1$.

\begin{definition}
A section $\psi \in \Gamma(\rho_\R)$ is {\rm holonomic of type $r$} in $\W$,
$1 \leqslant r \leqslant 2k-1$, if the section $\psi_1 = \rho_1 \circ \psi \in \Gamma(\bar{\pi}^{2k-1})$
is holonomic of type $r$ in $J^{2k-1}\pi$.

\noindent A vector field $X \in \vf(\W)$ is a {\rm semispray of type $r$} in $\W$,
$1 \leqslant r \leqslant k$,
if every integral section $\psi$ of $X$ is holonomic of type $r$ in $\W$.
\end{definition}

Let $\Theta_{k-1} \in \df^{1}(\Tan^*(J^{k-1}\pi))$
and $\Omega_{k-1} = -\d\Theta_{k-1} \in \df^{2}(\Tan^*(J^{k-1}\pi))$
be the canonical forms on $\Tan^*(J^{k-1}\pi)$.
The {\sl higher-order unified canonical forms} are
$\Theta = \rho_2^*\Theta_{k-1} \in \df^{1}(\W)$ and
$\Omega = \rho_2^*\Omega_{k-1} \in \df^{2}(\W)$.
Notice that $\ker\Omega = \ker{\rho_2}_*$,
and then $(\W,\Omega,\rho_\R^*\eta)$ is a precosymplectic manifold.

In natural coordinates, the above forms are given by
$$
\Theta = p_A^i\d q_i^A + p\d t \quad , \quad
\Omega = \d q_i^A \wedge \d p_A^i - \d p \wedge \d t \, ,
$$
and $\ker\Omega$ is locally given by
$$
\ker\Omega = \left\langle \derpar{}{q_k^A},\ldots,\derpar{}{q_{2k-1}^A} \right\rangle \ .
$$

\begin{definition}
The {\rm higher-order coupling $1$-form} $\hat{\C}\in\df^{1}(\W)$
is the $\rho_\R$-semibasic form defined as follows:
for every $w = (\bar{y},\alpha_q) \in \W$, where $\bar{y} \in J^{2k-1}\pi$,
$q = \pi^{2k-1}_{k-1}(\bar{y})$, and $\alpha_q \in \Tan_q^*(J^{k-1}\pi)$; 
if $\phi \in \Gamma(\pi)$ is any representative of $\bar{y}$,
and $u \in \Tan_w\W$, then
\begin{equation*}
\langle \hat{\C}(w) \mid u \rangle = \langle \alpha_q \mid (\Tan_w(j^{k-1}\phi \circ \rho_\R))(u) \rangle \, .
\end{equation*}
\end{definition}

As $\hat{\C}$ is a $\rho_\R$-semibasic form, there is
a {\sl coupling function} $\hat{C} \in \Cinfty(\W)$ such that
$\hat{\C} = \hat{C}\rho_\R^*\eta = \hat{C}\d t$. In natural coordinates
the coupling function is
\begin{equation}
\label{eqn:LocalCoordCouplingFunct}
\hat{C} = p + p_A^iq_{i+1}^A \, .
\end{equation}

The dynamical information is introduced by giving a
{\sl $kth$-order Lagrangian density} $\Lag \in \df^{1}(J^{k}\pi)$, which is
a $\bar{\pi}^k$-semibasic form. So we can write $\Lag = L \cdot (\bar{\pi}^k)^*\eta$,
where $L \in \Cinfty(J^{k}\pi)$ is the {\sl Lagrangian function}.
Then we denote
$\hat{\Lag} = (\pi^{2k-1}_{k} \circ \rho_1)^*\Lag$.
As the Lagrangian density is a $\bar{\pi}^{k}$-semibasic form, then
$\hat{\Lag}$ is a $\rho_\R$-semibasic $1$-form, and we have that
$\hat{\Lag} = \hat{L}\rho_\R^*\eta = \hat{L}\d t$, where
$\hat{L} = (\pi^{2k-1}_{k} \circ \rho_1)^*L \in \Cinfty(\W)$. 

In order to have a geometric structure in $\W_r$ we define
the so-called \textsl{Hamiltonian submanifold}
\begin{equation*}
% \label{eqn:DefHamiltonianSubmanifold}
\W_o = \left\{ w \in \W \colon \hat{\Lag}(w) = \hat{\C}(w) \right\} \stackrel{j_o}{\hookrightarrow} \W \, .
\end{equation*}
Since $\hat{\C}$ and $\hat{\Lag}$ are both $\rho_\R$-semibasic forms,
the submanifold $\W_o$ is defined by the constraint $\hat{C} - \hat{L} = 0$.
In natural coordinates, bearing in mind the local expression
\eqref{eqn:LocalCoordCouplingFunct} of $\hat{C}$,
the constraint function is given by
$p + p_A^iq_{i+1}^A - \hat{L} = 0$.

From \cite{art:Prieto_Roman11_2} we have that:

\begin{proposition}
\label{propaux}
The submanifold $\W_o \hookrightarrow \W$ is $1$-codimensional,
$\mu_\W$-transverse and
diffeomorphic to $\W_r$.
\end{proposition}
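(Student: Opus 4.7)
The plan is to work in the natural bundle coordinates, where everything reduces to a one-line computation. The central object is the constraint function $\xi := \hat{C} - \hat{L} \in \Cinfty(\W)$, which by \eqref{eqn:LocalCoordCouplingFunct} has the local form
$\xi = p + p_A^i q_{i+1}^A - \hat{L}(t,q_0^A,\ldots,q_k^A)$,
and whose key feature is the trivial identity $\partial\xi/\partial p = 1$.

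First, I would check that $\W_o$ is a $1$-codimensional embedded submanifold. Since $\d\xi$ has a nonzero $\d p$-component everywhere, $0$ is a regular value of $\xi$, and the claim follows immediately from the regular value theorem. For $\mu_\W$-transversality, I would unpack the definition of $\mu \colon \Tan^*(J^{k-1}\pi) \to J^{k-1}\pi^*$ as the quotient by $(\bar{\pi}^{k-1})^*\Tan^*\R$: its fiber at each point is generated by the pure-time covector $\d t$, which pulls back in $\W$ to the vertical direction $\partial/\partial p$. Thus $\ker T\mu_\W = \langle \partial/\partial p \rangle$ pointwise, and the identity $\d\xi(\partial/\partial p) = 1$ shows that this kernel is transverse to $T\W_o$ at every point of $\W_o$.

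For the last assertion, I would combine the dimension count $\dim \W_o = 3kn+1 = \dim \W_r$ with the transversality just established to conclude that $\mu_\W|_{\W_o}$ is a local diffeomorphism. To upgrade this to a global diffeomorphism, I would observe that the equation $\xi = 0$ solves uniquely for $p$ in terms of the remaining coordinates, namely $p = \hat{L} - p_A^i q_{i+1}^A$. Consequently every fiber of $\mu_\W$ meets $\W_o$ in exactly one point, so $\mu_\W|_{\W_o}$ is bijective and hence a global diffeomorphism onto $\W_r$.

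The main obstacle, such as it is, is not analytic but organizational: one has to make sure the fiber coordinate of $\mu_\W$ is unambiguously identified with the coordinate $p$ appearing in the coordinate expression of $\hat{C}$. Once this identification is in place, all three statements of the proposition collapse to the single observation $\partial\xi/\partial p = 1$.
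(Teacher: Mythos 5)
Your proof is correct, and it follows essentially the same route as the paper's: the statement is quoted from \cite{art:Prieto_Roman11_2}, where the argument is precisely this coordinate computation — the constraint $\hat{C}-\hat{L}=0$ is affine in $p$ with $\partial(\hat{C}-\hat{L})/\partial p = 1$, so $0$ is a regular value, the fibers $\langle \partial/\partial p\rangle = \ker\Tan\mu_\W$ are transverse to $\W_o$, and solving $p = \hat{L}-p_A^iq_{i+1}^A$ exhibits $\W_o$ as the image of the section $\hat{h}\in\Gamma(\mu_\W)$, hence diffeomorphic to $\W_r$. Your closing remark about identifying the fiber coordinate of $\mu_\W$ with the $p$ of the coupling function is the right point to flag, and it is settled by noting that each $\mu_\W$-fiber lies inside a single natural chart (equivalently, that $\hat{C}(w+\lambda\,\d t)=\hat{C}(w)+\lambda$ intrinsically).
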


As a consequence of this, if $\Upsilon\colon\W_r\to\W_o$ denotes this diffeomorphism,
we have an induced section $\hat{h}=j_o\circ\Upsilon \in \Gamma(\mu_\W)$, which is
specified by giving the local \textsl{Hamiltonian function}
$\hat{H} = -\hat{L} + p_A^iq_{i+1}^A \in \Cinfty(\W_r)$;
that is, we have 
$$
\hat{h}(t,q_i^A,q_j^A,p_A^i) = (t,q_i^A,q_j^A,-\hat{H},p_A^i) \equiv
(t,q_i^A,q_j^A,\hat{L}-p_A^iq_{i+1}^A,p_A^i) \ .
$$
The section $\hat{h}$ is
called a \textsl{Hamiltonian section} of $\mu_\W$, or a \textsl{Hamiltonian $\mu_\W$-section}.
$$
\xymatrix{
\ & \ & \W \ar@/_1.3pc/[llddd]_{\rho_1} \ar@/^1.3pc/[rrdd]^{\rho_2} & \ & \ \\
\ & \ & \W_r \ar[u]^{\hat h} \ar[lldd]_{\rho_1^r} \ar[rrdd]_{\hat{\rho}_2^r} \ar[ddd]^<(0.4){\rho_{J^{k-1}\pi}^r} 
\ar@/_3pc/[dddd]_-{\rho_\R^r}|(.675){\hole} & \ & \ \\
\ & \ & \ & \ & \Tan^*(J^{k-1}\pi) \ar[d]^-{\mu} \ar[lldd]_{\pi_{J^{k-1}\pi}}|(.25){\hole} \\
J^{2k-1}\pi \ar[rrd]_{\pi^{2k-1}_{k-1}} & \ & \ & \ & J^{k-1}\pi^* \ar[dll]^{\pi_{J^{k-1}\pi}^r} \\
\ & \ & J^{k-1}\pi \ar[d]^{\bar{\pi}^{k-1}} & \ & \ \\
\ & \ & \R & \ & \
}
$$

Next, we define the forms
$\Theta_r = \hat{h}^*\Theta \in \df^{1}(\W_r)$
and $\Omega_r = \hat{h}^*\Omega \in \df^{2}(\W_r)$,
whose expressions in natural coordinates are
\begin{equation*}
%  \label{eqn:LocalCoordFormsWo}
\Theta_r = p_A^i\d q_i^A + (\hat{L} - p_A^iq_{i+1}^A)\d t \quad ; \quad
\Omega_r = \d q_i^A \wedge \d p_A^i + \d(p_A^iq_{i+1}^A - \hat{L}) \wedge \d t \ .
\end{equation*}

{\bf Remark:}
The precosymplectic Hamiltonian system $(\W_r,\Omega_r,(\rho_\R^r)^*\eta)$
(or $(\W_o,\Omega_o,(\rho_\R^o)^*\eta)$, with $\Omega_o = j_o^*\Omega$)
represents the higher-order dynamical system in the Lagrangian-Hamiltonian unified formalism.

%%%%%%%%%%%%%%%%%%%%%%%%%%%%%%%%%%%%%%%%%%%%%%%%%%
%%%%%%%%%%%%%%%%%%%%%%%%%%%%%%%%%%%%%%%%%%%%%%%%%%
\section{Variational Principle for the unified formalism}
\label{varprincuf}
%%%%%%%%%%%%%%%%%%%%%%%%%%%%%%%%%%%%%%%%%%%%%%%%%%
%%%%%%%%%%%%%%%%%%%%%%%%%%%%%%%%%%%%%%%%%%%%%%%%%%

Next we establish the variational principle from which the dynamical
equations for the unified formalism are derived.
Our starting point is the
precosymplectic Hamiltonian system $(\W_r,\Omega_r,(\rho_\R^r)^*\eta)$.

Let $\Gamma(\rho_\R^r)$ be the set of sections of $\rho_\R^r$, that is,
curves $\psi \colon \R \to \W_r$. Consider the functional
\begin{equation*}
%\label{eqn:DefnFunctionalVariational}
\begin{array}{rcl}
\mathbf{LH} \colon \Gamma(\rho_\R^r) & \longrightarrow & \R \\
\psi & \longmapsto & \displaystyle \int_\R \psi^*\Theta_r
\end{array} \, ,
\end{equation*}
where the convergence of the integral is assumed.

\begin{definition}[Generalized Variational Principle]
The {\rm Lagrangian-Hamiltonian variational problem} for the system
$(\W_r,\Omega_r,(\rho_\R^r)^*\eta)$
is the search for the critical (or stationary) holonomic sections of
the functional $\mathbf{LH}$ with respect to the variations
of $\psi$ given by $\psi_t = \sigma_t \circ \psi$,
where $\left\{ \sigma_t \right\}$ is a local one-parameter group of
any compact-supported vector field $Z \in \vf^{V(\rho_\R^r)}(\W_r)$, that is,
\begin{equation*}
%\label{eqn:DynEqVarUnified}
\restric{\frac{d}{dt}}{t=0}\int_\R \psi_t^*\Theta_r = 0 \, .
\end{equation*}
\end{definition}

The main result of the calculus of variations in this context is the following:

\begin{theorem}
\label{thm:EquivalenceVariationalSectionsUnified}
The following assertions on a section $\psi \in \Gamma(\rho_\R^r)$ are equivalent:
\begin{enumerate}
\item $\psi$ is a solution to the Lagrangian-Hamiltonian variational problem.
\item $\psi$ is a holonomic section solution to the equation
\begin{equation*}
%\label{sect1}
\psi^*\inn(Y)\Omega_r = 0 \, , \quad \text{for every }Y \in \vf(\W_r) \, .
\end{equation*}
\item $\psi$ is a holonomic section solution to the equation
\begin{equation*}
%\label{sect2}
\inn(\psi^\prime)(\Omega_r \circ \psi) = 0\ ,
\end{equation*}
where $\psi^\prime \colon \R \to \Tan\W_r$ is the canonical lifting of $\psi$ to $\Tan\W_r$.
\item $\psi$ is an integral curve of a vector field
contained in a class of $\rho_\R^r$-transverse semisprays of type $1$,
$\left\{ X \right\} \subset \vf(\W_r)$,
satisfying the equation
\begin{equation}
\label{vf}
\inn(X)\Omega_r = 0 \, .
\end{equation}
\end{enumerate}
\end{theorem}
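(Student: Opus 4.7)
The plan is to prove the four conditions equivalent via a cycle $(1)\Rightarrow(2)\Leftrightarrow(3)\Leftrightarrow(4)\Rightarrow(1)$, with the only non-routine step being $(1)\Rightarrow(2)$; the other implications are essentially geometric rewritings of the same content.

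For $(1)\Rightarrow(2)$ I would compute the first variation in the standard way. Since the flow $\{\sigma_t\}$ of $Z$ restricts fiberwise (as $Z$ is $\rho_\R^r$-vertical), each $\psi_t=\sigma_t\circ\psi$ is again a section. Differentiating under the integral and applying Cartan's magic formula gives
\begin{equation*}
\restric{\frac{d}{dt}}{t=0}\int_\R \psi_t^*\Theta_r \;=\; \int_\R \psi^*\Lie(Z)\Theta_r
\;=\; \int_\R \psi^*\inn(Z)\d\Theta_r + \int_\R \d\big(\psi^*\inn(Z)\Theta_r\big).
\end{equation*}
The exact term vanishes by Stokes and the compact support of $Z$. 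Because $\Omega=-\d\Theta$ on $\W$ (inherited from the canonical forms on $\Tan^*(J^{k-1}\pi)$) and $\Omega_r=\hat h^*\Omega$, $\Theta_r=\hat h^*\Theta$, one has $\d\Theta_r=-\Omega_r$, so the variational identity reduces to $\int_\R \psi^*\inn(Z)\Omega_r=0$ for every compact-supported vertical $Z$. The fundamental lemma of the calculus of variations then yields $\psi^*\inn(Z)\Omega_r=0$ for every $Z\in\vf^{V(\rho_\R^r)}(\W_r)$. To upgrade from vertical vector fields to arbitrary $Y\in\vf(\W_r)$, I would use the fact that along the section $\psi$ we have the splitting $\Tan_{\psi(t)}\W_r=\langle\psi'(t)\rangle\oplus V_{\psi(t)}(\rho_\R^r)$, so locally $Y\circ\psi=\lambda\,\psi'+Z_v$ with $\lambda\in\Cinfty(\R)$ and $Z_v$ vertical along $\psi$; the $\lambda\psi'$-part contributes $\lambda\,\psi^*\inn(\psi')\Omega_r$, which vanishes by antisymmetry of $\Omega_r$ when contracted against $\partial_t$, while the vertical part has already been handled.

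For $(2)\Leftrightarrow(3)$ I would simply evaluate: both sides are $1$-forms on $\R$, and $(\psi^*\inn(Y)\Omega_r)(\partial_t)=\Omega_r(Y\circ\psi,\psi'(t))$, which vanishes for all $Y$ iff $\inn(\psi')(\Omega_r\circ\psi)=0$. For $(3)\Leftrightarrow(4)$, the implication $(4)\Rightarrow(3)$ is immediate from $\psi'=X\circ\psi$, while for $(3)\Rightarrow(4)$ I would extend $\psi'$ to a $\rho_\R^r$-transverse vector field $X$ on $\W_r$ satisfying $\inn(X)\Omega_r=0$; the freedom in the extension is exactly $\ker\Omega_r$, which produces the affine class $\{X\}$. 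The semispray-of-type-$1$ character of $X$ follows from the coordinate reading of $\inn(X)\Omega_r=0$: using the local expression of $\Omega_r$ given after the Hamiltonian section, the $\d p_A^i$-components force $X_i^A=q_{i+1}^A$ for $i=0,\dots,k-1$, which is precisely holonomy of type $1$. This also matches the holonomy assumption in $(2)$ and $(3)$, closing the cycle back to $(1)$.

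The main obstacle I anticipate is the interplay between holonomy and the variational argument. The variational principle of $(1)$ is stated over holonomic sections, yet the admissible variations $\psi_t$ need not be holonomic; consequently one must check that the holonomy constraint is compatible with the full set of equations extracted in $(2)$, rather than being an independent condition that could conflict with them. A secondary technical point is the existence of a semispray $X$ satisfying $\inn(X)\Omega_r=0$ globally: because $(\W_r,\Omega_r,(\rho_\R^r)^*\eta)$ is only precosymplectic, the equation is not everywhere solvable and a constraint algorithm is generally required; for the purposes of this equivalence, however, the existence of a solution $\psi$ to $(3)$ already guarantees an $X$ at least along $\psi$, and adding elements of $\ker\Omega_r$ produces the claimed class.
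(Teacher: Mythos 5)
Your treatment of the equivalence $1\Leftrightarrow 2$ is correct and is essentially the paper's own proof: first variation via Cartan's formula and Stokes' theorem, the fundamental lemma of variational calculus to obtain $\psi^*\inn(Z)\Omega_r=0$ for all $\rho_\R^r$-vertical $Z$, and then a splitting of $\Tan_w\W_r$ along ${\rm Im}\,\psi$ into a vertical part and a part tangent to the image, the latter contributing nothing because a $2$-form pulled back to the $1$-dimensional manifold $\R$ vanishes (your ``antisymmetry against $\partial_t$'' observation is the same fact). Note that the paper itself proves \emph{only} this equivalence and cites \cite{art:Prieto_Roman11_2} (Theorem 1) for $2\Leftrightarrow 3\Leftrightarrow 4$; your pointwise-evaluation argument for $2\Leftrightarrow 3$ is fine as well.

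However, your step $(3)\Rightarrow(4)$ contains a genuine error. You claim that the coordinate reading of $\inn(X)\Omega_r=0$ forces $X_i^A=q_{i+1}^A$ for $i=0,\dots,k-1$ and that this ``is precisely holonomy of type $1$''. It is not: on $J^{2k-1}\pi$ (hence on $\W_r$) a semispray of type $1$ must have the coefficient of $\derpar{}{q_l^A}$ equal to $q_{l+1}^A$ for \emph{all} $l=0,\dots,2k-2$, whereas the condition you derive covers only $l=0,\dots,k-1$, which is exactly the semispray-of-type-$k$ condition. The dynamical equation leaves the coefficients $F_j^A$, $k\leqslant j\leqslant 2k-1$, completely undetermined, since the corresponding directions span $\ker\Omega$; this is precisely the point the paper stresses in Section \ref{audeq}: ``in a natural way, $X$ is a semispray of type $k$. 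Nevertheless, the variational principle requires that $X$ must be a semispray of type $1$.'' So the type-$1$ property of the representatives in the class $\{X\}$ cannot be extracted from the equation $\inn(X)\Omega_r=0$; it must be imposed as an additional condition, and its consistency with assertion $3$ comes from the holonomy (type $1$) of $\psi$: since $\psi'=X\circ\psi$ and $\psi$ is holonomic, the coefficients of $X$ along ${\rm Im}\,\psi$ automatically satisfy the type-$1$ relations, and one then uses the freedom in $\ker\Omega_r$ to choose an extension off ${\rm Im}\,\psi$ that is a $\rho_\R^r$-transverse semispray of type $1$. As written, your cycle closes with a weaker conclusion (a class of semisprays of type $k$) than assertion $4$ states.
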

\begin{proof}
We prove the equivalence $1\,\Longleftrightarrow\, 2$ following
the patterns taken from \cite{art:Echeverria_DeLeon_Munoz_Roman07}.
For the proof of the other equivalences, see
\cite{art:Prieto_Roman11_2} (Theorem 1).

Let $Z \in \vf^{V(\rho_\R^r)}(\W_r)$ be a compact-supported vector field,
and $V \subset \R$ an open set such that $\partial V$ is a $0$-dimensional
manifold and $\rho_\R^r({\rm supp}(Z)) \subset V$. Then,
\begin{align*}
\restric{\frac{d}{dt}}{t=0} \int_\R \psi^*_t\Theta_r
&= \restric{\frac{d}{dt}}{t=0} \int_V \psi^*_t\Theta_r
= \restric{\frac{d}{dt}}{t=0} \int_V \psi^*\sigma_t^*\Theta_r\\
&= \int_V\psi^*\left( \lim_{t\to0} \frac{\sigma_t^*\Theta_r - \Theta_r}{t} \right)
= \int_V\psi^*\Lie(Z)\Theta_r \\
&= \int_V \psi^*(\inn(Z)\d \Theta_r + \d\inn(Z)\Theta_r)\\
&= \int_V \psi^*(-\inn(Z)\Omega_r + \d\inn(Z)\Theta_r)\\
&= - \int_V \psi^*\inn(Z)\Omega_r + \int_V \d(\psi^*\inn(Z)\Theta_r) \\
&= - \int_V \psi^*\inn(Z)\Omega_r + \int_{\partial V}\psi^*\inn(Z)\Theta_r
= - \int_V\psi^*\inn(Z)\Omega_r \, ,
\end{align*}
as a consequence of Stoke's theorem and the assumptions made
on the supports of the vertical vector fields. Thus,
by the fundamental theorem of the variational calculus,
we conclude
$$
\restric{\frac{d}{dt}}{t=0} \int_\R \psi_t^*\Theta_r = 0 \quad
\Longleftrightarrow \quad
\psi^*\inn(Z)\Omega_r = 0
$$
for every compact-supported $Z \in \vf^{V(\rho_\R^r)}(\W_r)$.
However, since the compact-supported vector fields generate locally
the $\Cinfty(\W_r)$-module of vector fields in $\W_r$,
it follows that the last equality holds for every $\rho_\R^r$-vertical
vector field $Z$ in $\W_r$.

Now, recall that for every point $w \in {\rm Im}\psi$, we have a canonical
splitting of the tangent space of $\W_r$ at $w$ in a $\rho_\R^r$-vertical
subspace and a $\rho_\R^r$-horizontal subspace, that is,
$$
\Tan_w\W_r = V_w(\rho_\R^r) \oplus \Tan_w({\rm Im}\psi) \, .
$$
Thus, if $Y \in \vf(\W_r)$, then
$$
Y_w = (Y_w - \Tan_w(\psi \circ \rho_\R^r)(Y_w)) + 
\Tan_w(\psi \circ \rho_\R^r)(Y_w) \equiv Y_w^V + Y_w^{\psi} \, ,
$$
with $Y_w^V \in V_w(\rho_\R^r)$ and $Y_w^{\psi} \in \Tan_w({\rm Im}\psi)$. Therefore
$$
\psi^*\inn(Y)\Omega_r= \psi^*\inn(Y^V)\Omega_r + \psi^*\inn(Y^{\psi})\Omega_r = 
\psi^*\inn(Y^{\psi})\Omega_r \, ,
$$
since $\psi^*\inn(Y^V)\Omega_r = 0$, by the conclusion in the above
paragraph. Now, as $Y^{\psi}_w \in \Tan_w({\rm Im}\psi)$
for every $w \in {\rm Im}\psi$,
then the vector field $Y^{\psi}$ is tangent to ${\rm Im}\psi$,
and hence there exists a vector field
$X \in \vf(\R)$ such that $X$ is $\psi$-related with $Y^{\psi}$;
that is, $\psi_*X = \restric{Y^{\psi}}{{\rm Im}\psi}$. Then
$\psi^*\inn(Y^{\psi})\Omega_r = \inn(X)\psi^*\Omega_r$. However, as
$\dim{\rm Im}\psi=\dim\R = 1$ and
$\Omega_r$ is a $2$-form, we obtain that $\psi^*\inn(Y^{\psi})\Omega_r = 0$.
Hence, we conclude that $\psi^*\inn(Y)\Omega_r = 0$ for
every $Y \in \vf(\W_r)$.

Taking into account the reasoning of the first paragraph,
the converse is obvious
since the condition $\psi^*\inn(Y)\Omega_r = 0$, for every $Y \in \vf(\W_r)$,
holds, in particular, for every $Z \in \vf^{V(\rho_\R^r)}(\W_r)$.
\end{proof}

%%%%%%%%%%%%%%%%%%%%%%%%%%%%%%%%%%%%%%%%%%%%%%%%%%
%%%%%%%%%%%%%%%%%%%%%%%%%%%%%%%%%%%%%%%%%%%%%%%%%%
\section{Analysis of the unified dynamical equations}
\label{audeq}
%%%%%%%%%%%%%%%%%%%%%%%%%%%%%%%%%%%%%%%%%%%%%%%%%%
%%%%%%%%%%%%%%%%%%%%%%%%%%%%%%%%%%%%%%%%%%%%%%%%%%

In order to complete the Lagrangian-Hamiltonian unified formalism,
it is necessary to analyze the dynamical equations.
We start by using the equations written for vector fields.
Thus, equations \eqref{vf}, with the $\rho_\R^r$-transverse condition, are
$$
\inn(X)\Omega_r = 0 \quad ; \quad
\inn(X)(\rho_\R^r)^*\eta \neq 0 \ .
$$
It is usual to fix the $\rho_\R^r$-transverse condition by demanding that
\begin{equation}
\label{gaugefix}
\inn(X)(\rho_\R^r)^*\eta = 1\ .
\end{equation}
This selects a representative in the class $\{ X\}$.
We will do this in the sequel.

The first important result is  \cite{art:Prieto_Roman11_2}:

\begin{proposition}
The above equations are compatible only on the points of the following submanifold of $\W_r$
\begin{align*}
\W_1 &= \left\{ w \in \W_r \colon (\inn(Z)\d\hat{H})(w) = 0, \mbox{ for every } Z \in \ker\Omega_r \right\} \\
&= \left\{ w \in \W_r \colon (\inn(Y)\Omega_r)(w) = 0, \text{ for every }Y \in \vf^{V(\hat{\rho}_2^r)}(\W_r) \right\} \ .
\end{align*}
\end{proposition}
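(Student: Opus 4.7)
The plan is to reduce the equation $\inn(X)\Omega_r = 0$ together with the normalization $\inn(X)(\rho_\R^r)^*\eta = 1$ to a coordinate calculation, extract the pointwise compatibility constraints, and then match them to the two intrinsic descriptions of $\W_1$ given in the statement. Writing the general vector field that satisfies the normalization as
\[ X = \frac{\partial}{\partial t} + X_i^A\frac{\partial}{\partial q_i^A} + X_j^A\frac{\partial}{\partial q_j^A} + X_A^i\frac{\partial}{\partial p_A^i} \qquad (0\le i\le k-1,\ k\le j\le 2k-1), \]
and using $\Omega_r = \d q_i^A \wedge \d p_A^i + \d\hat H \wedge \d t$, one obtains
\[ \inn(X)\Omega_r = X_i^A\,\d p_A^i - X_A^i\,\d q_i^A + X(\hat H)\,\d t - \d\hat H. \]
Matching coefficients on the adapted coframe $\{\d t,\d q_i^A,\d q_j^A,\d p_A^i\}$, the $\d p_A^i$ and $\d q_i^A$ parts (for $i\le k-1$) determine $X_i^A = \partial\hat H/\partial p_A^i$ and $X_A^i = -\partial\hat H/\partial q_i^A$, and the $\d t$ part reduces to an identity after this substitution. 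The remaining coefficients, namely those of $\d q_j^A$ for $j\ge k$, contain no component of $X$, so they impose the pointwise conditions $\partial\hat H/\partial q_j^A = 0$. For $j>k$ these are automatic since $\hat H = -\hat L + p_A^iq_{i+1}^A$ is independent of $q_j^A$, while for $j=k$ they read $p_A^{k-1} = \partial\hat L/\partial q_k^A$ and cut out a codimension-$n$ submanifold of $\W_r$, which is by construction exactly the locus where the equations are compatible.

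Next, I would recast these constraints intrinsically to match the two descriptions of $\W_1$. The space $\vf^{V(\hat\rho_2^r)}(\W_r)$ of vertical vector fields is locally spanned by $\partial/\partial q_j^A$ with $j\ge k$, because $J^{k-1}\pi^*$ carries coordinates $(t,q_i^A,p_A^i)$ with $i\le k-1$. A direct contraction gives
\[ \inn\!\left(\frac{\partial}{\partial q_j^A}\right)\Omega_r = \frac{\partial\hat H}{\partial q_j^A}\,\d t \qquad (j\ge k), \]
which immediately shows that $(\inn(Y)\Omega_r)(w) = 0$ for every $Y \in \vf^{V(\hat\rho_2^r)}(\W_r)$ is equivalent, at $w$, to $\partial\hat H/\partial q_j^A(w) = 0$ for all $j\ge k$, i.e.\ to $w\in\W_1$. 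The same formula $\inn(Z)\Omega_r = (\inn(Z)\d\hat H)\,\d t$, read with $\ker\Omega_r$ interpreted as the canonical vertical distribution attached to the projection $\hat\rho_2^r$, yields the first description: $(\inn(Z)\d\hat H)(w) = 0$ for every $Z \in \ker\Omega_r$ becomes the same pointwise condition.

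The main obstacle is handling $\ker\Omega_r$ cleanly. Since $\Omega_r$ is degenerate and its pointwise kernel drops by $n$ dimensions away from $\W_1$ (precisely because $\partial/\partial q_k^A$ ceases to annihilate $\Omega_r$ there), the phrase ``for every $Z \in \ker\Omega_r$'' must be read as referring to the canonical vertical distribution compatible with the precosymplectic structure $(\W_r,\Omega_r,(\rho_\R^r)^*\eta)$, in parallel with the identity $\ker\Omega = \ker{\rho_2}_*$ on $\W$. With this interpretation fixed, both characterizations of $\W_1$ in the statement agree with the compatibility locus produced by the local computation, and the proposition follows.
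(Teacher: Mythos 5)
Your proof is correct, and its computational core is the same as the paper's own: the paper formally defers this proposition to the companion paper \cite{art:Prieto_Roman11_2}, but the analysis it carries out immediately after the statement is precisely your argument --- fix $\inn(X)(\rho_\R^r)^*\eta = 1$, expand $\inn(X)\Omega_r = 0$ in the adapted coframe, and observe that the coefficients of $\d q_j^A$, $k\le j\le 2k-1$, contain no component of $X$, so they are pointwise obstructions giving the constraints $p_A^{k-1} - \derpar{\hat L}{q_k^A} = 0$ that define $\W_1$ locally. Where you genuinely add something is in your last paragraph: the paper nowhere addresses the fact that $\Omega_r$ fails to have constant rank, and your observation that the first description of $\W_1$ cannot be read with the pointwise kernel is correct and important. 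In fact the situation is even worse than a rank jump: off $\W_1$, every vector in the pointwise kernel of $\Omega_r$ \emph{automatically} annihilates $\d\hat{H}$ (the $\d q_k^A$-components of $\inn(Z)\Omega_r=0$ force $\inn(Z)\d t = 0$ wherever some $\derpar{\hat H}{q_k^A}\neq 0$, and then the $\d t$-component forces $\inn(Z)\d\hat{H}=0$), so under the literal pointwise reading the first condition would hold at \emph{every} point of $\W_r$ and the description would be vacuous. Your reinterpretation of $\ker\Omega_r$ as the $\hat{\rho}_2^r$-vertical distribution $\left\langle \partial/\partial q_j^A \right\rangle_{k\le j\le 2k-1}$ --- i.e., the $\mu_\W$-projection of $\ker\Omega = \ker{\rho_2}_*$ --- is exactly the reading under which the proposition is true, and it is clearly the intended one, since it makes the two displayed descriptions manifestly identical.

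Two small slips, neither of which damages the argument. First, the $\d t$-coefficient of $\inn(X)\Omega_r$ does not reduce to an identity after substituting $X_i^A = \derpar{\hat H}{p_A^i}$ and $X_A^i = -\derpar{\hat H}{q_i^A}$ alone; there remains the term $X_j^A\derpar{\hat H}{q_j^A}$, which vanishes only once the constraints coming from the $\d q_j^A$-coefficients are imposed. Since those constraints are what define $\W_1$, your identification of the compatibility locus is unaffected, but the sentence should be reordered accordingly. Second, the kernel of $\Omega_r$ does not drop by $n$ dimensions away from $\W_1$: it has dimension $kn+1$ at points of $\W_1$ (the span of the $\partial/\partial q_j^A$ together with the evolution direction) and dimension $kn-1$ off $\W_1$ (the vectors $c_j^A\,\partial/\partial q_j^A$ with $c_k^A\derpar{\hat H}{q_k^A} = 0$), so the drop is $2$; what is true, and what your argument actually needs, is that the $n$ vectors $\partial/\partial q_k^A$ cease to lie in the kernel off $\W_1$.
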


In natural coordinates, a generic vector field $X \in \vf(\W_r)$ is given by
$$
X = f\derpar{}{t} + f_i^A\derpar{}{q_i^A} + F_j^A\derpar{}{q_j^A} + G_A^i\derpar{}{p_A^i} \ ,
$$
and the $\rho_\R^r$-transverse condition implies $f\not=0$, and
in particular, using \eqref{gaugefix}, we get $f =1$.
Therefore, the dynamical equation \eqref{vf} first gives
$$
p_A^{k-1} - \derpar{\hat{L}}{q_k^A}= 0 \ ,
$$
which are the compatibility relations (constraints) defining locally $\W_1$.
Furthermore, for $0 \leqslant l \leqslant k-1$, $k \leqslant j \leqslant 2k-1$;
$$
f_i^A = q_{i+1}^A \quad ; \quad
G_A^0 = \derpar{\hat{L}}{q_0^A} \quad ; \quad G_A^{i} =  \derpar{\hat{L}}{q_{i}^A} - p_A^{i-1} \ ;
$$
therefore
$$
X = \derpar{}{t} + q_{i+1}^A\derpar{}{q_i^A} + F_j^A\derpar{}{q_j^A} + 
 \derpar{\hat{L}}{q_0^A}\derpar{}{p_A^0}+
\left( \derpar{\hat{L}}{q_{i}^A} - p_A^{i-1} \right)\derpar{}{p_A^{i}} \ .
$$
Observe that, in a natural way, $X$ is a semispray of type $k$. Nevertheless, the variational principle
requires that $X$ must be a semispray of type $1$, thus
$$
X = \derpar{}{t} + \sum_{l=0}^{2k-2}q_{l+1}^A\derpar{}{q_{l}^A} + F_{2k-1}^A\derpar{}{q_{2k-1}^A}+ 
 \derpar{\hat{L}}{q_0^A}\derpar{}{p_A^0}+\left( \derpar{\hat{L}}{q_i^A} - p_A^{i-1} \right)
\derpar{}{p_A^i}\ .
$$

Next we must require $X$ to be tangent to $\W_1$.
Thus, it is necessary to impose that $\restric{\Lie(X)\xi}{\W_1} = 0$, for every constraint function
$\xi$ defining $\W_1$:
\begin{align*}
\restric{X\left( p_A^{k-1} - \derpar{\hat L}{q_k^A} \right)}{\W_1} = 0 & \quad \Longleftrightarrow \quad
p_A^{k-2}-  \left( \derpar{\hat L}{q_{k-1}^A} - d_T\left( \derpar{\hat L}{q_k^A} \right) \right) = 0
\quad \mbox{(on $\W_1$)} \ .
\end{align*}
(where $d_T = \derpar{}{t} + q_{i+1}^A\derpar{}{q_i^A}$).
Repeating this procedure ($k-1$ steps), we get
\begin{align*}
p_A^0 &- \sum_{i=0}^{k-1}(-1)^i d_T^i\left(\derpar{\hat{L}}{q_{1+i}^A}\right) = 0
\quad \mbox{(on $\W_{k-1}$)} \ .
\end{align*}

Thus we obtain a sequence of submanifolds
(which can also be obtained by applying any other constraint algorithm
\cite{LMMMR-2002,art:Gotay_Nester_Hinds78}),
$$
\W_0 \hookleftarrow \W_1  \hookleftarrow  \ldots  \hookleftarrow \W_k\equiv\W_\Lag \ .
$$

As a consequence of the last equalities we conclude that

\begin{proposition}
The submanifold 
$\W_\Lag$ is the graph of a map $\Leg \colon J^{2k-1}\pi \to J^{k-1}\pi^*$ locally defined by
$$
\Leg^*t = t \quad , \quad
\Leg^*q_{r-1}^A = q_{r-1}^A \quad , \quad
\Leg^*p_A^{r-1} = \sum_{i=0}^{k-r}(-1)^i d_T^i\left( \derpar{\hat{L}}{q_{r+i}^A} \right) \, .
$$
\end{proposition}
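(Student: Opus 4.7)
The plan is to prove the statement by induction on the step of the constraint algorithm, using the tangency conditions $\Lie(X)\xi|_{\W_{s+1}} = 0$ for each constraint function $\xi$ defining $\W_s$. The base case, corresponding to $\W_1$, is already established in the excerpt by the compatibility condition $p_A^{k-1} - \partial\hat{L}/\partial q_k^A = 0$; this is exactly the formula in the proposition for $r = k$. Likewise, the first tangency step produces the formula for $r = k-1$, which is explicitly displayed.

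For the inductive step, I would assume that after $s$ iterations we have reached a submanifold $\W_s$ cut out by the collection of constraints
\begin{equation*}
\xi_r := p_A^{k-r} - \sum_{i=0}^{r-1}(-1)^i d_T^i\!\left(\frac{\partial\hat{L}}{\partial q_{k-r+1+i}^A}\right) = 0 \qquad (r = 1,\ldots,s),
\end{equation*}
and then compute $\Lie(X)\xi_s$ using the coordinate expression
\begin{equation*}
X = \frac{\partial}{\partial t} + \sum_{l=0}^{2k-2} q_{l+1}^A\frac{\partial}{\partial q_l^A} + F_{2k-1}^A\frac{\partial}{\partial q_{2k-1}^A} + \frac{\partial \hat{L}}{\partial q_0^A}\frac{\partial}{\partial p_A^0} + \left(\frac{\partial\hat{L}}{\partial q_i^A} - p_A^{i-1}\right)\frac{\partial}{\partial p_A^i}.
\end{equation*}
The momentum part gives $X(p_A^{k-s}) = \partial\hat{L}/\partial q_{k-s}^A - p_A^{k-s-1}$, and the $q$-part acts on functions of the jet coordinates exactly like the total time derivative $d_T$ (up to a correction in $F_{2k-1}^A\partial/\partial q_{2k-1}^A$ that vanishes as long as the function does not depend on $q_{2k-1}^A$; this is where one has to be careful). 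Rearranging the resulting equation yields a new constraint of exactly the same form with $r = s+1$, thereby defining $\W_{s+1}$ and closing the induction.

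The hard part will be the bookkeeping in the inductive step: one must verify that $X$ acting on $d_T^{i}(\partial\hat{L}/\partial q_{k-s+1+i}^A)$ produces $d_T^{i+1}(\partial\hat{L}/\partial q_{k-s+1+i}^A)$ without extra contributions from the indeterminate coefficient $F_{2k-1}^A$. This amounts to controlling the orders of the jet coordinates appearing in each iterated total derivative, and checking that they stay below $2k-1$ throughout the first $k$ steps. Once this is established, the alternating sum reorganizes into the stated closed form via standard telescoping.

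After exactly $k$ iterations, the constraints determine all of $p_A^0, p_A^1,\ldots, p_A^{k-1}$ as explicit functions of the coordinates $(t, q_i^A, q_j^A)$ on $J^{2k-1}\pi$, and no further constraints appear (so the algorithm stabilises at $\W_k = \W_\Lag$). Since the map $\mu_\W\circ j_\Lag : \W_\Lag \to \W_r$ restricted to $\W_\Lag$ is then expressed locally as the identity on $(t,q_i^A,q_j^A)$ together with the prescribed values for $p_A^{r-1}$, this identifies $\W_\Lag$ with the graph, via $\rho_1^r$, of the map $\Leg\colon J^{2k-1}\pi \to J^{k-1}\pi^*$ with components given in the statement. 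The fact that this map is globally well-defined follows from the intrinsic nature of the constraint algorithm together with Proposition~\ref{propaux}, which ensures the coordinate-independent status of the Hamiltonian section.
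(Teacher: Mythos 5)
Your proposal is correct and follows essentially the same route as the paper: the paper derives exactly these momentum constraints by the compatibility condition plus $k-1$ iterated tangency steps (each step valid precisely because the constraint functions involve jet coordinates of order at most $2k-2$, so the indeterminate coefficient $F_{2k-1}^A$ never contributes), and then reads off that $\W_k\equiv\W_\Lag$ is the graph of $\Leg$. The only inaccuracy is your parenthetical that ``no further constraints appear'': for singular $\Lag$ the algorithm can continue past $\W_\Lag$ to a smaller final submanifold $\W_f$, but this is irrelevant to the proposition, since $\W_\Lag$ is by definition cut out by the $k$ momentum constraints, which already exhibit it as the stated graph.
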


\begin{definition}
The map $\Leg \colon J^{2k-1}\pi \to J^{k-1}\pi^*$ is the
{\rm (restricted) Legendre-Ostrogradsky map}.

\noindent A Lagrangian density $\Lag$ is {\rm regular}
if the map $\Leg$ is a local diffeomorphism. 
Otherwise, $\Lag$ is said to be a {\rm singular Lagrangian}.
If $\Leg$ is a global diffeomorphism, then $\Lag$ is said to be {\rm hyperregular}.
\end{definition}

In natural coordinates, the regularity condition for $\Lag$ is
equivalent to
\begin{equation}
\label{reg}
\det\left( \derpars{L}{q_k^B}{q_k^A} \right)(\bar{y}) \neq 0 \, , \ \text{for every }\bar{y} \in J^{k}\pi \, .
\end{equation}

Observe that $X$ is not necessarily tangent to $\W_\Lag$.
Thus, imposing the tangency condition to the last generation of constraints
defining $\W_\Lag$, these conditions give the following equations (on $\W_\Lag$):
$$
(-1)^k\left(F_{2k-1}^B - d_T\left(q_{2k-1}^B\right)\right) \derpars{\hat{L}}{q_k^B}{q_k^A} +
 \sum_{i=0}^{k} (-1)^id_T^i\left( \derpar{\hat{L}}{q_i^A} \right) = 0 \ .
$$
And, as a consequence of \eqref{reg}, we have:

\begin{proposition}
If $\Lag$ is regular, then there is a unique
semispray of type $1$, $X \in \vf(\W_r)$, tangent to $\W_\Lag$, which is a
solution to the dynamical equations (on $\W_\Lag$).
\end{proposition}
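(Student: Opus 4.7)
The plan is to read the statement off as a direct consequence of the constraint-algorithm computation carried out just above. That analysis has already shown that equation \eqref{vf}, together with the normalization \eqref{gaugefix} and the semispray-of-type-$1$ requirement, fixes every coefficient of $X$ in natural coordinates except for the $n$ functions $F_{2k-1}^A$ multiplying $\partial/\partial q_{2k-1}^A$. Hence existence and uniqueness of $X$ (along $\W_\Lag$) will reduce to existence and uniqueness of these top-level coefficients.

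Next I would invoke the tangency equation displayed immediately before the statement. Imposing $\restric{\Lie(X)\xi}{\W_\Lag}=0$ on the last generation of constraint functions $\xi$ defining $\W_\Lag$ produces the linear system
$$(-1)^k\bigl(F_{2k-1}^B - d_T(q_{2k-1}^B)\bigr)\derpars{\hat L}{q_k^B}{q_k^A} + \sum_{i=0}^{k}(-1)^i d_T^i\left(\derpar{\hat L}{q_i^A}\right) = 0$$
in the $n$ unknowns $F_{2k-1}^B$. Its coefficient matrix is the Ostrogradsky Hessian $(\partial^2\hat L/\partial q_k^B\partial q_k^A)$, which is precisely the matrix appearing in the regularity condition \eqref{reg}. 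Under the hypothesis that $\Lag$ is regular, this matrix is pointwise invertible on $\W_\Lag$, so the system admits a unique solution, determining the $F_{2k-1}^A$ and hence $X$ uniquely on $\W_\Lag$.

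It remains to argue that the resulting $X$ is tangent to the full submanifold $\W_\Lag$ and not only to the last layer of constraints. This is automatic from the way the chain $\W_0\hookleftarrow\W_1\hookleftarrow\cdots\hookleftarrow\W_\Lag$ was constructed: at each stage the new constraints defining $\W_{j+1}$ were introduced as (the non-vanishing part of) $\Lie(X)\xi_j$ with $\xi_j$ defining $\W_j$, so vanishing of the new constraints on $\W_{j+1}$ already encodes tangency of $X$ to $\W_j$ along $\W_{j+1}$; iterating yields tangency to every earlier stratum. I do not anticipate a substantive obstacle: the argument collapses to the invertibility, guaranteed by \eqref{reg}, of the Ostrogradsky Hessian in the final linear system. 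Were regularity to fail, the system for $F_{2k-1}^A$ could become inconsistent or underdetermined, producing either no solution or a family of solutions, and this is exactly what the hypothesis excludes.
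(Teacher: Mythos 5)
Your proposal is correct and takes essentially the same approach as the paper: the proposition there is stated as an immediate consequence of the preceding coordinate computation, in which \eqref{vf}, \eqref{gaugefix} and the type-$1$ semispray requirement fix all coefficients of $X$ except $F_{2k-1}^A$, and the tangency condition on the final generation of constraints yields a linear system with coefficient matrix $\left(\derpars{\hat L}{q_k^B}{q_k^A}\right)$, invertible by the regularity condition \eqref{reg}. Your additional observation that tangency to the earlier constraints holds automatically on $\W_\Lag$ (because each generation of constraints is, by construction, $\Lie(X)$ applied to the previous one, and these vanish on $\W_\Lag$) simply makes explicit a point the paper leaves implicit in its constraint-algorithm construction.
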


If $\Lag$ is not regular, new constraints could appear and the algorithm continues until arriving
(in the best cases) at a {\sl final constraint submanifold} $\W_f\hookrightarrow\W_\Lag$.

If $\psi(t) = (t,q_i^A(t),q_j^A(t),p_A^i(t))$ is an integral section of $X$,
the above equations lead to
$$
%\begin{equation}
%\label{ELH}
\dot{q}_l^A = q_{l+1}^A \quad ; \quad
\dot{q}_{2k-1}^A = F_{2k-1}^A\circ\psi \quad ; \quad
\dot{p}_A^0 = \derpar{\hat{L}}{q_0^A} \quad ; \quad
\dot{p}_A^i = \derpar{\hat{L}}{q_i^A} - p_A^{i-1} \ ,
%\end{equation}
$$
and after some calculations we reach (on the points of $\W_\Lag$, or $\W_f$),
$$
\derpar{L}{q_0^A}\circ\psi-\frac{d}{dt}\left(\derpar{L}{q_1^A}\circ\psi\right)
+\ldots+ (-1)^k \frac{d^k}{dt^k}\left(\derpar{L}{q_k^A}\circ\psi\right) = 0
$$
$$
\dot{p}_A^0 = \derpar{\hat{L}}{q_0^A} \quad ; \quad
\dot{p}_A^i = \derpar{\hat{L}}{q_i^A} - p_A^{i-1} \ ,
$$
These equations compress both the higher-order Euler-Lagrange
and Hamilton equations, as can be seen in the following sections.

{\bf Remark:}
It is interesting to point out that the variational principle
for higher-order autonomous dynamical systems, and the corresponding dynamical equations,
can be obtained as a particular case of these results when the Lagrangian function
does not depend explicitly on the coordinate $t$.

%%%%%%%%%%%%%%%%%%%%%%%%%%%%%%%%%%%%%%%%%%%%%%%%%%
%%%%%%%%%%%%%%%%%%%%%%%%%%%%%%%%%%%%%%%%%%%%%%%%%%
\section{Lagrangian formalism: Generalized Hamilton Principle}
\label{Lagvp}
%%%%%%%%%%%%%%%%%%%%%%%%%%%%%%%%%%%%%%%%%%%%%%%%%%
%%%%%%%%%%%%%%%%%%%%%%%%%%%%%%%%%%%%%%%%%%%%%%%%%%

In this section we show how to recover the Lagrangian formalism for
higher-order mechanical systems.
In particular, we state the classical Hamilton Variational Principle
of the Lagrangian formalism for higher-order systems
and study its relation with the unified variational Principle.

First, consider the diagram
$$
\xymatrix{
\ & \ & \ & \W_r \ar@/_1.3pc/[ddll]_{\rho_1^r} \ar@/^1.3pc/[ddrr]^{\rho_2^r} & \ & \ \\
\ & \ & \ & \W_\Lag \ar[dll]^{\rho_1^\Lag} \ar[drr]_{\rho_2^\Lag} \ar@{^{(}->}[u]^{j_\Lag} & \ & \ \\
\ & J^{2k-1}\pi \ar[rrrr]^{\Leg} & \ & \ & \ & J^{k-1}\pi^* }
$$

As $\W_\Lag$ is the graph of the restricted Legendre-Ostrogadski map,
we have that the map $\rho_1^\Lag = \rho_1^r \circ j_\Lag \colon \W_\Lag \to J^{2k-1}\pi$ is a diffeomorphism.
Then we can define the {\sl Poincar\'{e}-Cartan $1$ and $2$ forms} in $J^{2k-1}\pi$ as
$$
\Theta_{\Lag} = (j_\Lag\circ(\rho^\Lag_1)^{-1})^* \Theta_r\quad ; \quad
\Omega_{\Lag} = - \d\Theta_{\Lag} =( j_\Lag\circ(\rho^\Lag_1)^{-1})^* \Omega_r\ .
$$
These forms can also be introduced in several equivalent ways
(see, for instance,
\cite{art:Aldaya_Azcarraga78,proc:Garcia_Munoz83,art:Saunders87,book:Saunders89}).

{\bf Remark:}
The triple $(J^{2k-1}\pi,\Omega_{\Lag},(\bar{\pi}^{2k-1})^*\eta)$ is the
{\sl higher-order non-autonomous Lagrangian system} associated to $(\W_r,\Omega_r,(\rho_\R^r)^*\eta)$.

Now we establish the variational principle from which we can obtain
the dynamical equations for the Lagrangian formalism.

Given the Lagrangian system $(J^{2k-1}\pi,\Omega_\Lag,(\bar{\pi}^{2k-1})^*\eta)$,
let $\Gamma(\pi)$ be the set of sections of $\pi$, that is, curves $\phi \colon \R \to E$.
Consider the functional
\begin{equation*}
\label{eqn:DefnFunctionalVariationalLagrangian}
\begin{array}{rcl}
\mathbf{L} \colon \Gamma(\pi) & \longrightarrow & \R \\
\phi & \longmapsto & \displaystyle \int_\R (j^{2k-1}\phi)^*\Theta_\Lag
\end{array} \, ,
\end{equation*}
where the convergence of the integral is assumed.

\begin{definition}[Generalized Hamilton Variational Principle]
The {\rm Lagrangian variational problem} 
(also called {\rm Hamilton variational problem})
for the higher-order Lagrangian system
$(J^{2k-1}\pi,\Omega_\Lag,(\bar{\pi}^{2k-1})^*\eta)$
is the search for the critical (or stationary) sections of
the functional $\mathbf{L}$ with respect to the variations
of $\phi$ given by $\phi_t = \sigma_t \circ \phi$,
where $\left\{ \sigma_t \right\}$ is a local one-parameter group of
any compact-supported $Z \in \vf^{V(\pi)}(E)$; that is,
\begin{equation*}
%\label{eqn:DynEqVarLagrangian}
\restric{\frac{d}{dt}}{t=0}\int_\R (j^{2k-1}\phi_t)^*\Theta_\Lag = 0 \, .
\end{equation*}
\end{definition}

Then, as in the above section, we have:

\begin{theorem}
The following assertions on a section $\phi \in \Gamma(\pi)$ are equivalent:
\begin{enumerate}
\item $\phi$ is a solution to the Lagrangian variational problem.
\item $\psi_\Lag =j^{2k-1}\phi$ is a solution to the equation
$$
\psi_\Lag^*\inn(Y)\Omega_\Lag = 0, \quad \mbox{for every }Y \in \vf(J^{2k-1}\pi) \, .
$$
\item $\psi_\Lag = j^{2k-1}\phi$ is a solution to the equation
$$
\inn(\psi^\prime_\Lag)(\Omega_\Lag \circ \psi_\Lag) = 0 \ ,
$$
where $\psi^\prime_\Lag \colon \R \to \Tan J^{2k-1}\pi$ is the canonical lifting of 
$\psi_\Lag$ to $\Tan J^{2k-1}\pi$.
\item $\psi_\Lag =j^{2k-1}\phi$ is an integral curve of a vector field
contained in a class of $\bar{\pi}^{2k-1}$-transverse semisprays of type $1$,
$\left\{ X_\Lag \right\} \subset \vf(J^{2k-1}\pi)$,
satisfying
$$\inn(X_\Lag)\Omega_\Lag = 0 \, .$$
\end{enumerate}
\end{theorem}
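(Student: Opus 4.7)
The plan is to mirror the proof of Theorem \ref{thm:EquivalenceVariationalSectionsUnified}, proving the equivalence $1\Longleftrightarrow 2$ in detail and taking the remaining equivalences from the arguments of \cite{art:Prieto_Roman11_2} (or simply translating the unified proof, replacing $\W_r$ with $J^{2k-1}\pi$ and $\rho_\R^r$ with $\bar\pi^{2k-1}$). The heart of the proof is therefore the implication $1\Longleftrightarrow 2$, which I would carry out in three steps.

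First I would compute the variation explicitly. Since $\phi_t=\sigma_t\circ\phi$ with $\{\sigma_t\}$ the local flow of $Z\in\vf^{V(\pi)}(E)$, naturality of jet prolongation gives $j^{2k-1}\phi_t=\tilde\sigma_t\circ j^{2k-1}\phi$, where $\tilde\sigma_t$ is the flow on $J^{2k-1}\pi$ of the prolonged vector field $\tilde Z=j^{2k-1}Z\in\vf^{V(\bar\pi^{2k-1})}(J^{2k-1}\pi)$. Writing $\psi_\Lag=j^{2k-1}\phi$ and applying Cartan's formula together with Stokes' theorem exactly as in the proof of Theorem \ref{thm:EquivalenceVariationalSectionsUnified} yields
\[
\restric{\frac{d}{dt}}{t=0}\int_\R(j^{2k-1}\phi_t)^*\Theta_\Lag
=-\int_V\psi_\Lag^*\inn(\tilde Z)\Omega_\Lag+\int_{\partial V}\psi_\Lag^*\inn(\tilde Z)\Theta_\Lag,
\]
for an open $V\subset\R$ with $\partial V$ zero-dimensional and $\pi(\mathrm{supp}(Z))\subset V$. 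The boundary term vanishes by compact support, and the fundamental lemma of the calculus of variations then gives $\psi_\Lag^*\inn(\tilde Z)\Omega_\Lag=0$ for every compact-supported $Z\in\vf^{V(\pi)}(E)$.

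The main obstacle is the second step: upgrading this identity from prolongations $\tilde Z$ to arbitrary $Y\in\vf(J^{2k-1}\pi)$. This is more delicate than in the unified setting, because not every $\bar\pi^{2k-1}$-vertical vector field on $J^{2k-1}\pi$ is a prolongation, so the $\Cinfty(J^{2k-1}\pi)$-module generation argument of Theorem \ref{thm:EquivalenceVariationalSectionsUnified} does not apply verbatim. To get around this I would invoke the holonomy of $\psi_\Lag$ together with the higher-order analogue of the standard property of the Poincar\'e--Cartan form (see \cite{art:Aldaya_Azcarraga78,proc:Garcia_Munoz83,art:Saunders87}) stating that, for $\psi_\Lag=j^{2k-1}\phi$ and any $Y\in\vf^{V(\bar\pi^{2k-1})}(J^{2k-1}\pi)$, the $1$-form $\psi_\Lag^*\inn(Y)\Omega_\Lag$ depends only on the projection of $Y$ along $\phi$ into $\vf^{V(\pi)}(E)$; hence prolongations already test all $\bar\pi^{2k-1}$-vertical $Y$. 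The remaining horizontal piece is then absorbed exactly as in Theorem \ref{thm:EquivalenceVariationalSectionsUnified}: using the canonical splitting $\Tan_w J^{2k-1}\pi=V_w(\bar\pi^{2k-1})\oplus\Tan_w({\rm Im}\,\psi_\Lag)$ write $Y_w=Y_w^V+Y_w^{\psi_\Lag}$, and note that the tangential part contributes $\inn(X)\psi_\Lag^*\Omega_\Lag=0$, because $\dim{\rm Im}\,\psi_\Lag=1$ while $\Omega_\Lag$ is a $2$-form.

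Finally, $2\Longleftrightarrow 3$ is a pointwise restatement: at each $t\in\R$ the identity $\psi_\Lag^*\inn(Y)\Omega_\Lag=0$ for all $Y$ is the same as $(\Omega_\Lag)_{\psi_\Lag(t)}(Y_{\psi_\Lag(t)},\psi_\Lag'(t))=0$, i.e.\ $\inn(\psi_\Lag')(\Omega_\Lag\circ\psi_\Lag)=0$. For $3\Longleftrightarrow 4$, if $X_\Lag$ is a $\bar\pi^{2k-1}$-transverse semispray of type $1$ satisfying $\inn(X_\Lag)\Omega_\Lag=0$, each of its integral curves is holonomic by the semispray condition and satisfies $3$ automatically; conversely, any holonomic $\psi_\Lag$ satisfying $3$ is the integral curve of some such semispray, the remaining freedom in the choice being absorbed in the class $\{X_\Lag\}$, as detailed in Theorem~1 of \cite{art:Prieto_Roman11_2}.
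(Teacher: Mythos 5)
Your proposal is correct, and its overall plan --- prove $1\Longleftrightarrow 2$ by mirroring Theorem \ref{thm:EquivalenceVariationalSectionsUnified} and defer the equivalences with $3$ and $4$ to Theorem 3 of \cite{art:Prieto_Roman11_2} --- is exactly the paper's. Where you genuinely depart from (and improve on) the paper's two-line proof is in recognizing that the unified argument does \emph{not} transfer verbatim. In the unified problem the admissible variations are generated by arbitrary compact-supported $\rho_\R^r$-vertical vector fields, which form a $\Cinfty(\W_r)$-module; this is what makes both the localization (fundamental lemma) step and the extension to all vertical fields work there. In the Lagrangian problem the variations are generated only by prolongations $j^{2k-1}Z$ of $\pi$-vertical fields $Z$ on $E$, and these are not a $\Cinfty(J^{2k-1}\pi)$-module (indeed $j^{2k-1}(fZ)\neq f\,j^{2k-1}Z$), so the module argument breaks down exactly as you say. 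Your substitute key lemma --- that along a holonomic section $\psi_\Lag=j^{2k-1}\phi$ the $1$-form $\psi_\Lag^*\inn(Y)\Omega_\Lag$ depends only on the projection of $Y$ to $\vf^{V(\pi)}(E)$ along $\phi$, equivalently that contractions with $\pi^{2k-1}_0$-vertical fields pull back to zero on holonomic sections --- is the correct higher-order first-variation property of the Poincar\'e--Cartan form; it can be checked directly from $\Theta_\Lag = L\,\d t + \hat p_A^i(\d q_i^A - q_{i+1}^A\d t)$, where $\hat p_A^i = \Leg^*p_A^i$ are the Legendre--Ostrogradsky momentum functions, using the recursion $\hat p_A^{i-1} = \partial \hat L/\partial q_i^A - d_T\hat p_A^i$, and it is what the references \cite{art:Aldaya_Azcarraga78,proc:Garcia_Munoz83,art:Saunders87} and Theorem 3 of \cite{art:Prieto_Roman11_2} actually rely on. One minor logical reordering you should make: as written, you invoke the fundamental lemma \emph{before} this property, but localizing a prolongation (replacing $Z$ by $gZ$, with $g$ a bump function) produces derivatives of $g$ inside $j^{2k-1}(gZ)$, so the pointwise conclusion for prolongations already requires the lemma, which shows the integrand equals the Euler--Lagrange expressions contracted with $Z^A\circ\phi$, times $\d t$. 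With that reordering, both directions of $1\Longleftrightarrow 2$ and your pointwise argument for $2\Longleftrightarrow 3$ are sound. In short, the paper buys brevity by appealing to ``the same patterns''; your version makes the transfer rigorous by supplying the one new ingredient those patterns actually need.
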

\begin{proof}
The proof of the equivalence $1\,\Longleftrightarrow\, 2$
follows the same patterns as in Theorem
\ref{thm:EquivalenceVariationalSectionsUnified}.
For the proof of the other equivalences, see
\cite{art:Prieto_Roman11_2} (Theorem 3).
\end{proof}

Section solutions to the Hamilton variational problem
are recovered from section solutions to the Lagrangian-Hamiltonian
variational problem in the unified formalism. In fact:

\begin{theorem}
\label{prop:LagrangianVariational}
Let $\psi \in \Gamma(\rho_\R^r)$ be a holonomic section which
is a solution to the Lagrangian-Hamiltonian variational problem
given by the functional  \textbf{LH}.
Then, the section
$\psi_\Lag = \rho_1^r \circ \psi \in \Gamma(\bar{\pi}^{2k-1})$ is
holonomic, and its projection $\phi = \pi^{2k-1} \circ \psi_\Lag \in \Gamma(\pi)$
is a solution to the Lagrangian variational problem given by the functional $\mathbf{L}$;

\noindent Conversely, from a holonomic section $\psi_\Lag=j^{2k-1}\phi \in \Gamma(\bar{\pi}^{2k-1})$
which is a solution to the Lagrangian variational problem,
we recover a solution  $\psi=(\psi_\Lag,\psi_\Lag\circ\Leg)$
to the Lagrangian-Hamiltonian variational problem.
$$
\xymatrix{
\ & \ & \W_r \ar[dd]_-{\rho_\R^r} \ar[dll]_-{\rho_1^r}  \\
J^{2k-1}\pi \ar[d]_{\pi^{2k-1}} \ar[drr]_<(0.25){\bar{\pi}^{2k-1}} & \ & \ \\
E & \ & \R \ar@/_1pc/[uu]_{\psi} \ar@/_1pc/@{-->}[ull]_{\psi_\Lag} \ar[ll]_{\phi} \\
}
$$
\end{theorem}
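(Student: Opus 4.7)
The plan is to reduce both implications of the theorem to the equivalent equational formulation $\psi^*\inn(Y)\Omega_r=0$, $Y\in\vf(\W_r)$, provided by Theorem~\ref{thm:EquivalenceVariationalSectionsUnified}, together with its Lagrangian analogue on $J^{2k-1}\pi$. The geometric bridge is the map $F:=j_\Lag\circ(\rho_1^\Lag)^{-1}\colon J^{2k-1}\pi\to\W_r$, which is a diffeomorphism onto $\W_\Lag$ and which satisfies $F^*\Omega_r=\Omega_\Lag$ by the very definition of the Poincar\'e--Cartan forms in Section~\ref{Lagvp}. Consequently, whenever a section $\psi\in\Gamma(\rho_\R^r)$ factors as $\psi=F\circ\psi_\Lag$, one has the identity $\psi^*\inn(Y)\Omega_r=\psi_\Lag^*\inn(Y_\Lag)\Omega_\Lag$ for any $Y\in\vf(\W_r)$ that is $F$-related along $\W_\Lag$ to some $Y_\Lag\in\vf(J^{2k-1}\pi)$.

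For the forward direction, I start with a holonomic $\psi\in\Gamma(\rho_\R^r)$ solving the unified variational problem. The constraint algorithm carried out in Section~\ref{audeq} forces $\psi$ to take values in the final constraint submanifold, contained in $\W_\Lag$; hence $\psi=F\circ\psi_\Lag$ with $\psi_\Lag=\rho_1^r\circ\psi$, and holonomy of $\psi$ yields $\psi_\Lag=j^{2k-1}\phi$ for $\phi=\pi^{2k-1}\circ\psi_\Lag\in\Gamma(\pi)$. Given any $Y_\Lag\in\vf(J^{2k-1}\pi)$, a standard extension produces $Y\in\vf(\W_r)$ with $Y\circ F=F_*Y_\Lag$ on $\W_\Lag$; combining this with the hypothesis $\psi^*\inn(Y)\Omega_r=0$ gives $\psi_\Lag^*\inn(Y_\Lag)\Omega_\Lag=0$ for every $Y_\Lag$, and the Lagrangian analogue of Theorem~\ref{thm:EquivalenceVariationalSectionsUnified} concludes that $\phi$ is a solution of the Hamilton variational problem.

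For the converse, given $\phi\in\Gamma(\pi)$ solving the Hamilton variational problem, I define $\psi:=F\circ j^{2k-1}\phi=(j^{2k-1}\phi,\,j^{2k-1}\phi\circ\Leg)$, which is holonomic by construction and lies on $\W_\Lag$. To verify the unified variational equation it suffices, by Theorem~\ref{thm:EquivalenceVariationalSectionsUnified}, to check $\psi^*\inn(Y)\Omega_r=0$ for every $Y\in\vf(\W_r)$. Working in the natural coordinates of Section~\ref{laghamunif}, with $\Omega_r=\d q_i^A\wedge\d p_A^i+\d(p_A^iq_{i+1}^A-\hat L)\wedge\d t$, I test against the coordinate basis of $\rho_\R^r$-vertical vector fields: the $\partial/\partial p_A^i$ components reproduce the holonomy identities $\dot q_i^A=q_{i+1}^A$, automatic by construction; the $\partial/\partial q_j^A$ components for $k\le j\le 2k-1$ reproduce the Legendre--Ostrogradsky relations defining $\W_\Lag$, automatic since $\psi$ lies on the graph of $\Leg$; and the $\partial/\partial q_i^A$ components for $0\le i\le k-1$ yield, after the iterated integrations by parts encoded in the recursive definition of $\Leg$, the higher-order Euler--Lagrange equations, which $\phi$ satisfies by hypothesis.

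The main technical obstacle lies in this converse step. In the forward direction, the map $F$ lets one trade every $Y_\Lag$ on $J^{2k-1}\pi$ for a suitable $Y$ on $\W_r$, so the problem reduces to the Lagrangian setting at once. In the converse, the test vector fields $Y$ can point in the $\partial/\partial p_A^i$ or higher $\partial/\partial q_j^A$ directions, which are transverse to $\W_\Lag$, and the cancellations for those transverse components rely essentially on the explicit expression $\Leg^*p_A^{r-1}=\sum_{i=0}^{k-r}(-1)^i d_T^i(\partial\hat L/\partial q_{r+i}^A)$ being precisely the combination of iterated total derivatives that matches the higher-order Euler--Lagrange operator against the symplectic pairing in $\Omega_r$, as exhibited by the coordinate analysis of Section~\ref{audeq}.
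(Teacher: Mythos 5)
Your proof is correct, but it takes a genuinely different route from the paper's. The paper never leaves the level of the functionals: it lifts a compact-supported $\pi$-vertical field $Z$ on $E$ to $j^{2k-1}Z$ and, since $\rho_1^r$ is a submersion, to a $\rho_1^r$-related compact-supported vertical field $Y$ on $\W_r$, and then shows by a chain of Lie-derivative identities relating $\Theta_\Lag$ and $\Theta_r$ that the first variation of $\mathbf{L}$ at $\phi$ along $Z$ equals the first variation of $\mathbf{LH}$ at $\psi$ along $Y$; the converse is dispatched by reading that chain backwards. You instead convert both variational problems into their equational characterizations (item 2 of Theorem \ref{thm:EquivalenceVariationalSectionsUnified} and its Lagrangian analogue) and transport the equation through the diffeomorphism $F=j_\Lag\circ(\rho_1^\Lag)^{-1}$ using $F^*\Omega_r=\Omega_\Lag$ and $F$-related fields, which obliges you to first establish ${\rm Im}\,\psi\subset\W_\Lag$ via the constraint analysis of Section \ref{audeq} --- a step the paper leaves implicit in its forward direction. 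Your route buys a more robust converse: reversing the paper's chain only controls variations $Y$ that project onto prolongations $j^{2k-1}Z$, whereas criticality of $\mathbf{LH}$ requires all $\rho_\R^r$-vertical variations, and it is exactly the transverse ($\partial/\partial p_A^i$ and higher $\partial/\partial q_j^A$) components that your coordinate check handles explicitly; the paper's route, in exchange, is coordinate-free and exhibits the correspondence of variations directly, which is the natural statement for a variational theorem. Two bookkeeping corrections to your converse, neither of which damages it: for $k<j\leqslant 2k-1$ one has $\inn(\partial/\partial q_j^A)\Omega_r=0$ identically, so only $j=k$ produces a Legendre--Ostrogradsky relation (the primary constraint $p_A^{k-1}=\partial\hat L/\partial q_k^A$); and of the $\partial/\partial q_i^A$ equations, those with $1\leqslant i\leqslant k-1$ reduce to identities by the recursion $\Leg^*p_A^{i-1}=\partial\hat L/\partial q_i^A-d_T(\Leg^*p_A^i)$, so it is the $i=0$ equation alone that carries the Euler--Lagrange condition satisfied by $\phi$.
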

\begin{proof}
As $\psi \in \Gamma(\rho_\R^r)$ is holonomic,
then $\psi_\Lag = \rho_1^r \circ \psi\in \Gamma(\bar{\pi}^{2k-1})$
is a holonomic section, by definition.

Now, $\rho_1^r$ being a submersion,
for every compact-supported vector field
$X \in \vf^{V(\bar{\pi}^{2k-1})}(J^{2k-1}\pi)$
there exist compact-supported vector fields
$Y \in \vf^{V(\rho_\R^r)}(\W_r)$ such that
${\rho_1^r}_*Y = X$; that is, $X$ and $Y$ are $\rho_1^r$-related.
In particular, this holds if $X$ is the $(2k-1)$-jet lifting
of a compact-supported $\pi$-vertical vector field in $E$; that is, if we
have $X = j^{2k-1}Z$, with $Z \in \vf^{V(\pi)}(E)$.
We denote by $\left\{ \sigma_t \right\}$ a local one-parameter
group for the compact-supported vector fields $Y \in \vf^{V(\rho_\R^o)}(\W_r)$.
Then, using this particular choice of $\rho_1^r$-related vector fields, we have
\begin{align*}
\restric{\frac{d}{dt}}{t=0}\int_\R (j^{2k-1}\phi_t)^*\Theta_\Lag
&= \restric{\frac{d}{dt}}{t=0}\int_\R(j^{2k-1}(\sigma_t \circ\phi))^*\Theta_\Lag \\
&= \restric{\frac{d}{dt}}{t=0}\int_\R(j^{2k-1}\phi)^*(j^{2k-1}\sigma_t)^*\Theta_\Lag \\
&= \int_\R\psi_\Lag^*\Lie(j^{2k-1}Z)\Theta_\Lag 
%= \int_\R\psi_\Lag^*(\inn(j^{2k-1}Z)\d\Theta_\Lag + \d\inn(j^{2k-1}Z)\Theta_\Lag)
%\\
%&= \int_\R \psi_o^*(\rho_1^o)^*(\inn(j^{2k-1}Z)\d\Theta_\Lag + %\d\inn(j^{2k-1}Z)\Theta_\Lag) \\
= \int_\R \psi^*(\rho_1^r)^*\Lie(j^{2k-1}Z)\Theta_\Lag \\
% \\
%&= \int_\R \psi_o^*(\inn(Y)\d\Theta_o + \d\inn(Y)\Theta_o)
&= \int_\R \psi^*\Lie(Y)\Theta_r 
= \restric{\frac{d}{dt}}{t=0}\int_\R\psi^*\sigma_t^*\Theta_r \\
&= \restric{\frac{d}{dt}}{t=0} \int_\R \psi_t^*\Theta_r = 0 \, ,
\end{align*}
since $\psi$ is a critical section for the Lagrangian-Hamiltonian
variational problem.

Conversely, if we have a holonomic section $\psi_\Lag=j^{2k-1}\phi$ which is a
solution to the Lagrangian variational problem, then we can construct
$\psi=(\psi_\Lag,\psi_\Lag\circ\Leg)$, which is a section
$\psi\colon\R\to\W_\Lag\subset\W_r$ of the projection $\rho_\R^r$
(remember that, in the unified formalism, the dynamical equations
have solutions only on the points of $\W_\Lag$, or in a subset of it).
Then, the above reasoning also shows also that if $\psi_\Lag$
is a solution to the Lagrangian variational problem,
then $\psi$ is a solution to the Lagrangian-Hamiltonian variational problem.
\end{proof}

In natural coordinates, if $\psi$ is given by $\psi(t) = (t,q_i^A(t),q_j^A(t),p_A^i(t))$,
then $\psi_\Lag = (\rho_1^r \circ \psi)(t) = (t,q_i^A(t),q_j^A(t))$,
and $\phi(t) = (\pi^{2k-1} \circ \psi_\Lag)(t) = (t,q_0^A(t))$
satisfies the {\sl $k$th-order Euler-Lagrange equations}
$$
\restric{\derpar{L}{q_0^A}}{j^{2k-1}\phi} - \restric{\frac{d}{dt}\derpar{L}{q_1^A}}{j^{2k-1}\phi}
+ \ldots +
(-1)^k \restric{\frac{d^k}{dt^k}\derpar{L}{q_k^A}}{j^{2k-1}\phi} = 0 \, .
$$

Finally, as a consequence of all the above results, we have the
corresponding relation between vector field solutions to the unified dynamical equations
and those which are solutions to the Lagrangian equations:

\begin{proposition}
Let $X \in \vf(\W_r)$ be a vector field tangent to $\W_\Lag$ 
which is a solution to the equations
\begin{equation}
\label{uno}
\inn(X)\Omega_r = 0 \quad ; \quad \inn(X)(\rho_\R^r)^*\eta = 1 \ .
\end{equation}
Then there exists a unique semispray of type $k$, $X_\Lag \in \vf(J^{2k-1}\pi)$,
which is a solution to the equations
\begin{equation}
\label{dos}
\inn(X_\Lag)\Omega_{\Lag} = 0 \quad ; \quad
\inn(X_\Lag)(\bar{\pi}^{2k-1})^*\eta = 1 \ .
\end{equation}
In addition, if $\Lag$ is a regular
Lagrangian density, then $X_\Lag$ is a semispray of type $1$.

\noindent Conversely, if $X_\Lag \in \vf(J^{2k-1}\pi)$ is a semispray of type $k$
(resp., of type $1$), which is a solution to the equations (\ref{dos}),
then there exists a unique $X \in \vf(\W_r)$ which
is a solution to the equations (\ref{uno})
and it is a semispray of type $k$ in $\W_r$ (resp., of type $1$).
\end{proposition}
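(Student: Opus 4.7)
The central observation I would exploit is that $\rho_1^\Lag = \rho_1^r \circ j_\Lag \colon \W_\Lag \to J^{2k-1}\pi$ is a diffeomorphism and, by the very definition of $\Omega_\Lag$, intertwines the two-forms in the sense that $(\rho_1^\Lag)^*\Omega_\Lag = j_\Lag^*\Omega_r$. Since moreover $\rho_\R^r = \bar\pi^{2k-1}\circ\rho_1^r$, one also has $(\rho_1^\Lag)^*(\bar\pi^{2k-1})^*\eta = j_\Lag^*(\rho_\R^r)^*\eta$. The plan is to transport vector fields between $\W_\Lag$ and $J^{2k-1}\pi$ along $\rho_1^\Lag$, restricting from $\W_r$ to $\W_\Lag$ in the forward direction and extending from $\W_\Lag$ to $\W_r$ in the converse direction via the coordinate prescription derived in Section \ref{audeq}.

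For the forward implication, tangency of $X$ to $\W_\Lag$ means $X|_{\W_\Lag}$ is a well-defined vector field on $\W_\Lag$; I would set $X_\Lag := (\rho_1^\Lag)_*(X|_{\W_\Lag}) \in \vf(J^{2k-1}\pi)$, which is unique because $\rho_1^\Lag$ is a diffeomorphism. Equations (\ref{dos}) then follow from the chain of equalities $(\rho_1^\Lag)^*\inn(X_\Lag)\Omega_\Lag = \inn(X|_{\W_\Lag})j_\Lag^*\Omega_r = j_\Lag^*\inn(X)\Omega_r = 0$, and analogously for the volume form using $\inn(X)(\rho_\R^r)^*\eta=1$, combined with the injectivity of $(\rho_1^\Lag)^*$. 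The semispray type would be read off the coordinate expressions from Section \ref{audeq}: the equation $\inn(X)\Omega_r=0$ together with \eqref{gaugefix} already forces $X$ to have the shape $\derpar{}{t} + \sum_{i=0}^{k-1} q_{i+1}^A\derpar{}{q_i^A} + \sum_{j=k}^{2k-1}F_j^A\derpar{}{q_j^A} + (\text{momentum part})$, so $X_\Lag$ is automatically a semispray of type $k$. When $\Lag$ is regular, I would run the tangency conditions along the chain $\W_1 \hookleftarrow \cdots \hookleftarrow \W_\Lag$ one level at a time and use the invertibility of $\left(\derpars{L}{q_k^B}{q_k^A}\right)$ to solve iteratively for the free components, forcing $F_j^A = q_{j+1}^A$ for $k \leq j \leq 2k-2$; this promotes $X_\Lag$ to a semispray of type $1$.

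For the converse, given $X_\Lag$, I would first define $Y := ((\rho_1^\Lag)^{-1})_*X_\Lag \in \vf(\W_\Lag)$, and then extend $Y$ to a vector field $X$ on $\W_r$ by prescribing, in natural coordinates, the $t$- and $q_i^A$-components to match those of $X_\Lag$ (this respects the semispray type) and the $p$-components to be $G_A^0 = \derpar{\hat L}{q_0^A}$ and $G_A^i = \derpar{\hat L}{q_i^A} - p_A^{i-1}$, exactly as dictated by $\inn(X)\Omega_r=0$. By construction $X$ satisfies (\ref{uno}) on all of $\W_r$, and the same intertwining used above would show that its restriction to $\W_\Lag$ recovers $Y$. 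Uniqueness is immediate: any other solution would have identical $p$-components (pinned by the equations) and identical $q$-components (pinned by the requirement $(\rho_1^r)_*X|_{\W_\Lag} = X_\Lag$).

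The step I expect to be the main obstacle is the tangency claim in the converse: one must verify that the coordinate-based extension is intrinsically defined and tangent to $\W_\Lag$ without further assumption. I would handle this by re-expressing the prescription for the $G_A^i$ intrinsically as the (solvable part of the) equation $\inn(X)\Omega_r=0$, which is coordinate-free, and by noting that $\W_\Lag$ is the graph of $\Leg$, so tangency of $X$ along $\W_\Lag$ is equivalent to the compatibility of the $p_A^i$-components of $X$ with the total time derivatives of the Legendre--Ostrogradsky relations; the latter compatibility is precisely what the Lagrangian equations (\ref{dos}) for $X_\Lag$ guarantee.
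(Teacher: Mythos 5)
Your strategy is sound and is, in essence, the standard argument behind this result (note the paper itself gives no proof here, deferring entirely to Theorem 2 of \cite{art:Prieto_Roman11_2}): transport vector fields through the diffeomorphism $\rho_1^\Lag\colon\W_\Lag\to J^{2k-1}\pi$, using the intertwining identities $(\rho_1^\Lag)^*\Omega_\Lag = j_\Lag^*\Omega_r$ and $(\rho_1^\Lag)^*(\bar\pi^{2k-1})^*\eta = j_\Lag^*(\rho_\R^r)^*\eta$; read the momentum components off the coordinate form of $\inn(X)\Omega_r=0$; and---this is the real crux, which you identify correctly---observe that tangency of the extended field to $\W_\Lag$ is equivalent to the Lagrangian equations \eqref{dos} for $X_\Lag$, via the recursion $\hat\psi_A^{r-1} = \derpar{\hat L}{q_r^A} - d_T\bigl(\hat\psi_A^r\bigr)$ satisfied by the functions $\hat\psi_A^{r-1} = \sum_{i=0}^{k-r}(-1)^i d_T^i\left(\derpar{\hat L}{q_{r+i}^A}\right)$ whose graph defines $\W_\Lag$. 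The forward direction, the promotion to type $1$ under regularity (iterated tangency plus invertibility of the Hessian forcing $F_j^A=q_{j+1}^A$ for $k\leqslant j\leqslant 2k-2$), and the tangency verification in the converse are all correct in substance.

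However, one assertion in your converse is false as stated: the constructed $X$ does \emph{not} satisfy \eqref{uno} ``on all of $\W_r$''. Whatever components you prescribe, the one-form $\inn(X)\Omega_r$ always contains the term $\left(\derpar{\hat L}{q_k^A} - p_A^{k-1}\right)\d q_k^A$ (plus a matching multiple of $\d t$), whose coefficient involves no component of $X$ and therefore cannot be cancelled; this is precisely the paper's compatibility result (Proposition 2 of Section \ref{audeq}): the equation $\inn(X)\Omega_r=0$ admits solutions only at points of $\W_1$. Consequently, equations \eqref{uno} must be read---as the paper does throughout Section \ref{audeq}---as holding on the points of $\W_\Lag$ (or at most $\W_1$), and with that reading your construction does deliver a solution, so the flaw is an overclaim rather than a fatal error, but it must be repaired. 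Relatedly, both of your uniqueness claims are meaningful only for $X$ along $\W_\Lag$ (equivalently, for $X|_{\W_\Lag}$): away from $\W_\Lag$ neither the equations nor the relation $(\rho_1^r)_*\bigl(X|_{\W_\Lag}\bigr)=X_\Lag$ constrains $X$ at all, so ``unique $X\in\vf(\W_r)$'' can only mean uniqueness of the restriction. This imprecision is inherited from the statement of the proposition itself, but a complete proof should make it explicit.
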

\begin{proof}
See also \cite{art:Prieto_Roman11_2} (Theorem 2) for a detailed proof of this statement.
\end{proof}

%%%%%%%%%%%%%%%%%%%%%%%%%%%%%%%%%%%%%%%%%%%%%%%%%%
%%%%%%%%%%%%%%%%%%%%%%%%%%%%%%%%%%%%%%%%%%%%%%%%%%
\section{Hamiltonian formalism: Generalized Hamilton-Jacobi Principle}
\label{Hamvp}
%%%%%%%%%%%%%%%%%%%%%%%%%%%%%%%%%%%%%%%%%%%%%%%%%%
%%%%%%%%%%%%%%%%%%%%%%%%%%%%%%%%%%%%%%%%%%%%%%%%%%

In this section we state the Hamiltonian variational
problem ({\sl Hamilton-Jacobi Principle\/}) for higher-order systems,
recovering it from the unified formalism.
(See \cite{art:Prieto_Roman11_2} for the proofs and details
on the higher-order Hamiltonian formalism).

Consider the restricted Legendre-Ostrogradsky map
$\Leg \colon J^{2k-1}\pi \to J^{k-1}\pi^*$.
First, it can be proved that the following statements are equivalent:
\begin{enumerate}
\item $\Omega_\Lag$ has maximal rank on $J^{2k-1}\pi$.
\item $\Leg \colon J^{2k-1}\pi \to J^{k-1}\pi^*$ is a local diffeomorphism.
\item In natural coordinates of $J^{k}\pi$, $\det\left( \derpars{L}{q_k^B}{q_k^A} \right)(\bar{y}) \neq 0$,
for every $\bar{y} \in J^{k}\pi$.
\end{enumerate}
As stated in Section \ref{audeq},
if these conditions are fulfilled, the Lagrangan density $\Lag$
is said to be {\sl regular}, and when the restricted Legendre-Ostrogradsky
map is a global diffeomorphism, then  $\Lag$ is {\sl hyperregular}.

Now, let
$\widetilde{\P} = {\rm Im}(\widetilde{\Leg}) \stackrel{\tilde{\jmath}}{\hookrightarrow}\Tan^*(J^{k-1}\pi)$
and $\P = {\rm Im}(\Leg) \stackrel{\jmath}{\hookrightarrow}J^{k-1}\pi^*$.
If $\bar{\tau} = \pi_{J^{k-1}\pi}^r \circ \bar{\pi}^{k-1} \colon J^{k-1}\pi^* \to \R$
is the natural projection, we denote
$\bar{\tau}_o = \bar{\tau} \circ \jmath \colon \P \to \R$.
A Lagrangian density $\Lag \in \df^{1}(J^{k}\pi)$ is said to be
{\sl almost-regular} if:
\begin{enumerate}
\item $\P$ is a closed submanifold of $J^{k-1}\pi^*$.
\item $\Leg$ is a submersion onto its image.
\item For every $\bar{y} \in J^{2k-1}\pi$, the fibers $\Leg^{-1}(\Leg(\bar{y}))$ are connected
submanifolds of $J^{2k-1}\pi$.
\end{enumerate}

The Hamiltonian section $\hat{h} \in \Gamma(\mu_\W)$
(introduced after Proposition \ref{propaux})
induces a Hamiltonian section $h \in \Gamma(\mu)$ defined by
$$
\rho_2\circ \hat{h}=h\circ\rho^r_2
$$
Then, if $\Theta_{k-1}$ and $\Omega_{k-1}$ are the canonical $1$ and $2$ forms
of the cotangent bundle $\Tan^*(J^{k-1}\pi)$, we can construct the
{\sl Hamilton-Cartan forms} in $J^{k-1}\pi^*$ and $\P$ by making
\begin{align*}
\Theta_h = h^*\Theta _{k-1} \in \df^{1}(J^{k-1}\pi^*)  \quad &; \quad
\Omega_h = h^*\Omega  \in \df^{2}(J^{k-1}\pi^*) \\
\Theta_{\mathcal P} = \jmath^*\Theta _h \in \df^{1}({\mathcal P})  \quad &; \quad
\Omega_{\mathcal P} = \jmath^*\Omega_h  \in \df^{2}({\mathcal P}) \ .
\end{align*}
Observe that
$\Leg^*\Theta_h = \Theta_\Lag$ and $\Leg^*\Omega_h = \Omega_\Lag$.
$$
\xymatrix{
\ & \ & \W  \ar[rrd]_{\rho_2} & \ & \ \\
\ & \ & \W_r  \ar[u]^{\hat h} \ar[ddd]_<(0.6){\rho_\R^r} \ar[drr]^{\hat{\rho}_2^r} \ar[dll]_{\rho_1^r} 
& \ & T^*(J^{k-1}\pi)\\
J^{2k-1}\pi \ar[ddrr]^{\bar{\pi}^{2k-1}}
 \ar[rrrr]^<(0.65){\Leg}|(.478){\hole} \ar[drrrr]_<(0.7){\Leg_o}|(.478){\hole} & \ & \ & \ & J^{k-1}\pi^* \ar[u]^{h} \\
\ & \ & \ & \ & \mathcal{P} \ar@{^{(}->}[u]^{\jmath} \ar[dll]_{\bar{\tau}_o} \\
\ & \ & \R & \ & \ \\
}
$$

{\bf Remark:}
$(\P,\Omega_{\mathcal P},\bar{\tau}_o^*\eta)$ is the {\sl higher-order non-autonomous
Hamiltonian system} associated with $(\W_r,\Omega_r,(\rho_\R^r)^*\eta)$.

In what follows, we consider that the Lagrangian density
$\Lag \in \df^{1}(J^{k}\pi)$ is, at least, almost-regular. However,
all the following results also hold for
regular or hyperregular Lagrangian densities,
replacing $\P$ by the corresponding open subset of $J^{k-1}\pi^*$,
or by $J^{k-1}\pi^*$, respectively.

First, we establish the variational principle from which we can obtain the dynamical equations
for the Hamiltonian formalism, and then we show how to obtain the
Hamiltonian dynamical equations.

Given the Hamiltonian system $(\P,\Omega_{\P},\bar{\tau}_o^*\eta)$,
let $\Gamma(\bar{\tau}_o)$ be the set of sections of $\bar{\tau}_o$,
that is, curves $\varphi \colon \R \to \P$. Consider the functional
\begin{equation*}
\label{eqn:DefnFunctionalVariationalHamiltonian}
\begin{array}{rcl}
\mathbf{H} \colon \Gamma(\bar{\tau}_o) & \longrightarrow & \R \\
\varphi & \longmapsto & \displaystyle \int_\R \varphi^*\Theta_{\mathcal P}
\end{array} \, ,
\end{equation*}
where the convergence of the integral is assumed.

\begin{definition}[Generalized Hamilton-Jacobi Variational Principle]
The {\rm Hamiltonian} or {\rm Hamilton-Jacobi variational problem} for
the higher-order Hamiltonian system $(\P,\Omega_{\P},\bar{\tau}_o^*\eta)$
is the search for the critical (or stationary) sections of
the functional $\mathbf{H}$ with respect to the variations
of $\varphi$ given by $\varphi_t = \sigma_t \circ \varphi$,
where $\left\{ \sigma_t \right\}$ is a local one-parameter group of
any compact-supported $Z \in \vf^{V(\bar{\tau}_o)}(\P)$; that is
\begin{equation}
\label{eqn:DynEqVarHamiltonian}
\restric{\frac{d}{dt}}{t=0}\int_\R \varphi_t^*\Theta_{\P} = 0
\end{equation}
\end{definition}

Then, as in the above sections, we have:

\begin{theorem}
The following assertions on a section $\varphi \in \Gamma(\bar{\tau}_o)$ are equivalent:
\begin{enumerate}
\item $\varphi$ is a solution to the Hamiltonian variational problem.
\item $\varphi$ is a solution to the equation
$$\varphi^*\inn(Y)\Omega_{\mathcal P} = 0 \, , \quad \text{for every }Y \in \vf(\P) \, .$$
\item $\varphi$ is a solution to the equation
$$
\inn(\varphi^\prime)(\Omega_{\mathcal P} \circ\varphi) = 0\ ,
$$
where $\varphi^\prime \colon \R \to \Tan\P$ is the canonical lifting of 
$\varphi$ to $\Tan\P$.

\item $\varphi$ is an integral curve of a vector field
contained in a class of $\bar{\tau}_o$-transverse vector fields,
$\left\{ X_h \right\} \subset \vf(\P)$,
satisfying
$$
\inn(X_h)\Omega_{\mathcal P} = 0 \, .
$$
\end{enumerate}
\end{theorem}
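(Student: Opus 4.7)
The plan is to mimic, nearly verbatim, the proof of Theorem \ref{thm:EquivalenceVariationalSectionsUnified}, since $(\P,\Omega_\P,\bar{\tau}_o^*\eta)$ has exactly the same formal structure as $(\W_r,\Omega_r,(\rho_\R^r)^*\eta)$: a fibration over $\R$ carrying a closed $2$-form with a global primitive $\Theta_\P$, and the functional $\mathbf{H}$ is defined by integrating $\varphi^*\Theta_\P$. I will establish the chain $1 \Longleftrightarrow 2 \Longleftrightarrow 3 \Longleftrightarrow 4$, copying the logical architecture from the unified case.

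For $1 \Longleftrightarrow 2$, I take a compact-supported $Z \in \vf^{V(\bar{\tau}_o)}(\P)$ with local one-parameter group $\{\sigma_t\}$, choose an open $V \subset \R$ containing $\bar{\tau}_o({\rm supp}\,Z)$ whose boundary is $0$-dimensional, and compute the derivative at $t=0$ of $\int_\R\varphi_t^*\Theta_\P$ using $\d\Theta_\P = -\Omega_\P$, Cartan's magic formula, and Stokes' theorem. The boundary term dies by compact support, leaving $-\int_V\varphi^*\inn(Z)\Omega_\P$, and the fundamental lemma yields $\varphi^*\inn(Z)\Omega_\P = 0$ for every such $Z$. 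To upgrade to arbitrary $Y \in \vf(\P)$, I use the pointwise splitting $\Tan_w\P = V_w(\bar{\tau}_o) \oplus \Tan_w({\rm Im}\,\varphi)$ at each $w \in {\rm Im}\,\varphi$, writing $Y_w = Y_w^V + Y_w^\varphi$: the vertical part contributes zero by what has just been proved, and for the other piece one has $\varphi^*\inn(Y^\varphi)\Omega_\P = \inn(X)\varphi^*\Omega_\P = 0$ since $\varphi^*\Omega_\P$ is a $2$-form on the $1$-dimensional $\R$. The converse is immediate by specializing $Y$ to vertical fields.

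For $2 \Longleftrightarrow 3$, pairing the equation with $\partial_t$ at $\varphi(t)$ gives $\Omega_\P(\varphi(t))(Y(\varphi(t)),\varphi'(t)) = 0$; as $Y$ ranges over $\vf(\P)$ this is equivalent to $\inn(\varphi')(\Omega_\P \circ \varphi) = 0$, and pulling back along $\varphi$ recovers the other direction. For $3 \Longleftrightarrow 4$, an integral curve of a $\bar{\tau}_o$-transverse $X_h$ with $\inn(X_h)\Omega_\P = 0$ trivially satisfies $3$ via $\varphi' = X_h\circ\varphi$; conversely, from $\varphi$ satisfying $3$ I extend $\varphi'$ along ${\rm Im}\,\varphi$ to a $\bar{\tau}_o$-transverse vector field, defined only up to $\ker\Omega_\P$-valued corrections, which is precisely the content of the class $\{X_h\}$.

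The main obstacle will be the extension step in $3 \Longrightarrow 4$ in the almost-regular case, where $\Omega_\P$ is genuinely precosymplectic and $\ker\Omega_\P$ is non-trivial, so one must ensure that a field $X_h$ in the class actually exists globally and is $\bar{\tau}_o$-transverse. The cleanest route is to transport the corresponding assertion from Theorem \ref{thm:EquivalenceVariationalSectionsUnified}, using the relations $\Leg^*\Theta_h = \Theta_\Lag$, $\rho_2\circ\hat{h} = h\circ\rho_2^r$ and $\Theta_\P = \jmath^*\Theta_h$ to project a solution vector field on $\W_r$ down to $\P$ through $\hat{\rho}_2^r$ and $\jmath$.
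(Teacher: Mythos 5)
Your proposal is correct, and for the only equivalence this paper actually argues ($1\Longleftrightarrow 2$) it coincides with the paper's proof: the paper disposes of it by saying it ``follows the same patterns as in Theorem \ref{thm:EquivalenceVariationalSectionsUnified}'', which is precisely your verbatim transcription (Cartan's formula with $\d\Theta_{\mathcal P}=-\Omega_{\mathcal P}$, Stokes' theorem, the fundamental lemma of the calculus of variations, and the splitting $\Tan_w\P = V_w(\bar{\tau}_o)\oplus\Tan_w({\rm Im}\,\varphi)$ with the tangential part killed because $\varphi^*\Omega_{\mathcal P}$ is a $2$-form on the $1$-dimensional base). Where you genuinely differ is in the remaining equivalences: the paper does not prove $2\Longleftrightarrow 3\Longleftrightarrow 4$ here at all, but delegates them to Theorem 5 of \cite{art:Prieto_Roman11_2}, whereas you supply direct arguments. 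Your $2\Longleftrightarrow 3$ (evaluating $\varphi^*\inn(Y)\Omega_{\mathcal P}$ on $\partial/\partial t$ and using that every tangent vector at a point of ${\rm Im}\,\varphi$ is the value of some $Y\in\vf(\P)$, plus antisymmetry of $\Omega_{\mathcal P}$) is sound and essentially complete. For $3\Longrightarrow 4$ you correctly isolate the one nontrivial point, namely extending $\varphi'$ off ${\rm Im}\,\varphi$ to a globally defined $\bar{\tau}_o$-transverse vector field annihilating $\Omega_{\mathcal P}$, which in the almost-regular case can only be expected to hold on a final constraint submanifold $\P_f\hookrightarrow\P$; your proposed remedy---transporting a solution vector field from $\W_r$ down to $\P$ via $\Leg_o\circ\rho_1^r$, using $(\Leg_o\circ\rho_1^r)^*\Theta_{\mathcal P}=\Theta_r$---is exactly the mechanism the paper itself invokes in the last Proposition of Section \ref{Hamvp} (Theorem 4 of \cite{art:Prieto_Roman11_2}). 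In short, your route buys self-containedness, at the price of leaving that extension/projectability step as a sketch rather than a full argument; the paper's route is shorter only because it outsources precisely that step to the companion paper.
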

\begin{proof}
The proof of the equivalence $1\,\Longleftrightarrow\, 2$
follows the same patterns as in Theorem
\ref{thm:EquivalenceVariationalSectionsUnified}.
For the proof of the other equivalences, see
\cite{art:Prieto_Roman11_2} (Theorem 5).
\end{proof}

In addition, section solutions to the Hamilton equations
are recovered from the solutions to the dynamical equations
in the unified formalism. In fact:

\begin{theorem}
\label{prop:HamiltonianVariational}
Let $\psi \in \Gamma(\rho_\R^r)$ be a
critical section for the Lagrangian-Hamiltonian variational
problem given by the functional  \textbf{LH}. Then, the section
$\psi_h = \Leg_o \circ \rho_1^r \circ \psi = \Leg_o \circ \psi_\Lag \in \Gamma(\bar{\tau}_o)$
is a critical section for the Hamiltonian variational problem
given by the functional  \textbf{H}.

\noindent Conversely, from a section $\psi_h$ solution to the Hamiltonian variational problem,
we recover a solution  $\psi$ to the Lagrangian-Hamiltonian variational problem.
$$
\xymatrix{
\ & \ & \W_r \ar[dd]_<(0.6){\rho_\R^r}  \ar[dll]_{\rho_1^r}  \ar[rr]_{\rho_2^r} & 
\ & J^{k-1}\pi^*& \\
J^{2k-1}\pi \ar[drr]^{\bar{\pi}^{2k-1}} \ar[rrrr]^<(0.35){\Leg_o}|(.5){\hole}|(.565){\hole} & \ & \ & \ 
& \P \ar[dll]_{\bar{\tau}_o}  \ar@{^{(}->}[u]^{\jmath} \\
\ & \ & \R \ar@/_1pc/[uu]_<(0.535){\psi} \ar@/^1pc/[ull]^{\psi_\Lag} 
\ar@/_1pc/@{-->}[urr]_-{\psi_h = \Leg_o \circ \psi_\Lag} & \ & \ 
}
$$
\end{theorem}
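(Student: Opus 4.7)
The plan is to mirror the argument of Theorem \ref{prop:LagrangianVariational}, substituting the projection $\rho_1^r$ by the composition $\Leg_o \circ \rho_1^r \colon \W_r \to \P$. The preliminary step, needed in both directions, is the identification of integrands on $\W_\Lag$: for any section $\psi \in \Gamma(\rho_\R^r)$ with image in $\W_\Lag$ and $\psi_h = \Leg_o \circ \psi_\Lag$, the chain of equalities
\[
\psi^*\Theta_r = \psi_\Lag^*\Theta_\Lag = \psi_\Lag^*\Leg^*\Theta_h = \psi_h^*\jmath^*\Theta_h = \psi_h^*\Theta_\P
\]
holds, using the definition $\Theta_\Lag = (j_\Lag \circ (\rho_1^\Lag)^{-1})^*\Theta_r$ together with the identities $\Theta_\Lag = \Leg^*\Theta_h$, $\Theta_\P = \jmath^*\Theta_h$ and $\Leg = \jmath \circ \Leg_o$. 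Consequently $\mathbf{LH}(\psi) = \mathbf{H}(\psi_h)$ for every such $\psi$.

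For the forward implication, given a critical section $\psi$ of $\mathbf{LH}$, recall from the constraint analysis of Section \ref{audeq} that $\mathrm{Im}(\psi) \subset \W_\Lag$. Any compact-supported vertical variation $Z \in \vf^{V(\bar{\tau}_o)}(\P)$ then admits a compact-supported lift $Y \in \vf^{V(\rho_\R^r)}(\W_r)$ that is $(\Leg_o \circ \rho_1^r)$-projectable onto $Z$, because $\Leg_o$ is a surjective submersion onto $\P$ (by almost-regularity) and $\rho_1^r$ is itself a submersion. Choose such a $Y$ with flow $\sigma_t^Y$ covering the flow $\sigma_t^Z$ along $\mathrm{Im}(\psi_h)$, so that the variation $\psi_t = \sigma_t^Y \circ \psi$ projects to $\varphi_t = \sigma_t^Z \circ \psi_h$. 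The preliminary identity gives $\int_\R \psi_t^*\Theta_r = \int_\R \varphi_t^*\Theta_\P$; differentiating at $t=0$ and invoking criticality of $\psi$ yields the required vanishing for $\psi_h$.

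The converse is handled analogously: starting from a critical $\psi_h \in \Gamma(\bar{\tau}_o)$, one first constructs a holonomic section $\psi_\Lag \in \Gamma(\bar{\pi}^{2k-1})$ with $\Leg_o \circ \psi_\Lag = \psi_h$, and then defines $\psi = (\psi_\Lag, \psi_\Lag \circ \Leg)$, which is a section of $\rho_\R^r$ with image in $\W_\Lag$. The same Lie-derivative/Stokes computation, together with the observation that any compact-supported variation on $\W_r$ projects to a variation on $\P$ (the components tangent to the fibres of $\hat{\rho}_2^r$ contribute nothing, by the dimension-counting argument already used in the proof of Theorem \ref{thm:EquivalenceVariationalSectionsUnified}), yields criticality of $\psi$ for $\mathbf{LH}$. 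The main obstacle is precisely this reconstruction step in the almost-regular but non-regular case: since $\Leg_o$ is not injective on $J^{2k-1}\pi$, the lift $\psi_\Lag$ is only determined modulo the fibres of $\Leg_o$, and one must combine the holonomy requirement with the connected-fibre hypothesis in the definition of almost-regularity to single out a canonical lift, while also verifying that the resulting $\psi$ actually lands in the final constraint submanifold $\W_f$ identified in Section \ref{audeq}.
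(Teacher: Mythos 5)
Your proposal follows the same architecture as the paper's proof: the pullback relation among $\Theta_r$, $\Theta_\Lag$ and $\Theta_{\P}$, the lifting of compact-supported $\bar{\tau}_o$-vertical variations through the submersion $\Leg_o\circ\rho_1^r$, and, for the converse, the reconstruction of a holonomic $\psi_\Lag$ with $\Leg_o\circ\psi_\Lag=\psi_h$ (imported, as in the paper, from the theory of singular non-autonomous systems) followed by an appeal to the Lagrangian case. However, the forward direction as written has a genuine gap. Your preliminary identity $\psi^*\Theta_r=\psi_h^*\Theta_{\P}$ is valid only for sections whose image lies in $\W_\Lag$, whereas the deformed sections $\psi_t=\sigma_t^Y\circ\psi$ leave $\W_\Lag$ for $t\neq 0$: a generic $\rho_\R^r$-vertical lift $Y$ of $Z$ is not tangent to $\W_\Lag$, and off $\W_\Lag$ one has, in natural coordinates,
$$
\Theta_r-(\Leg_o\circ\rho_1^r)^*\Theta_{\P}
=\left(p_A^i-\hat{p}_A^i\right)\left(\d q_i^A-q_{i+1}^A\,\d t\right)\neq 0\,,
$$
where $\hat{p}_A^i=(\rho_1^r)^*\Leg^*p_A^i$ are the Legendre--Ostrogradsky momentum functions. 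Hence the asserted equality $\int_\R\psi_t^*\Theta_r=\int_\R\varphi_t^*\Theta_{\P}$ is false for $t\neq 0$, and you cannot differentiate it at $t=0$.

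The gap is repairable in two ways, the second being what the paper actually does. (i) Choose the lift $Y$ tangent to $\W_\Lag$: this is possible because $(\Leg_o\circ\rho_1^r)\circ j_\Lag=\Leg_o\circ\rho_1^\Lag$ is still a surjective submersion onto $\P$ (almost-regularity); then $\psi_t$ remains in $\W_\Lag$ and your identity holds for every $t$. (ii) Work only at first order: by relatedness of $Y$ and $Z$ under $F=\Leg_o\circ\rho_1^r$, one has $\restric{\frac{d}{dt}}{t=0}\int_\R(\psi_h)_t^*\Theta_{\P}=\int_\R\psi^*\Lie(Y)F^*\Theta_{\P}$, and the discrepancy $\int_\R\psi^*\Lie(Y)\left(F^*\Theta_{\P}-\Theta_r\right)$ vanishes because the $1$-form displayed above vanishes at every point of $\W_\Lag$, $\psi$ takes values in $\W_\Lag$, and $\psi$ is holonomic, so that $\psi^*(\d q_i^A-q_{i+1}^A\d t)=0$; this is the precise content of the paper's identity $(\rho_1^r)^*\Theta_\Lag=\Theta_r$, which itself is only valid along $\W_\Lag$ but is used there only under pullback by $\psi$. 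Two smaller remarks on your converse: no ``canonical'' lift $\psi_\Lag$ needs to be singled out (uniqueness is not claimed in the singular case; any holonomic lift solving the Lagrangian equations suffices, and its existence is exactly what the cited singular-systems theory provides); and the paper obtains criticality of the reconstructed $\psi$ by passing through the Lagrangian variational problem (the converse part of Theorem \ref{prop:LagrangianVariational}), rather than by your direct projection argument, which as stated is not correct, since an arbitrary compact-supported vector field on $\W_r$ need not be $F$-projectable.
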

\begin{proof}
Observe that $\Leg_o \circ \rho_1^r$ is a submersion,
since it is a composition of submersions, and
$(\Leg_o \circ \rho_1^r)^*\Theta_{\P} = (\rho_1^r)^*(\Leg_o^*\Theta_h) = 
(\rho_1^r)^*\Theta_\Lag = \Theta_r$.
Then, for every compact-supported vector field $Z \in \vf^{V(\bar{\tau}_o)}(\P)$,
there exist compact-supported vector fields $Y \in \vf^{V(\rho_\R^r)}(\W_r)$
such that $(\Leg_o \circ \rho_1^r)_*Y = Z$; that is,
$Z$ is $(\Leg_o \circ \rho_1^r)$-related with $Y$. We denote
by $\{\sigma_t^r\}$ a local one-parameter group for the
compact-supported vector fields $Y \in \vf^{V(\rho_\R^r)}(\W_r)$.
Then, using this particular choice of $(\Leg_o \circ \rho_1^r)$-related vector fields, we have
\begin{align*}
\restric{\frac{d}{dt}}{t=0} \int_\R(\psi_h)_t^*\Theta_{\P}
&= \restric{\frac{d}{dt}}{t=0} \int_\R(\sigma_t \circ \psi_h)^*\Theta_{\P}
= \restric{\frac{d}{dt}}{t=0} \int_\R\psi_h^*(\sigma_t^*\Theta_{\P}) \\
&= \int_\R \psi_h^*\Lie(Z)\Theta_{\P}
%&= \int_\R \psi_h^*(\inn(Z)\d\Theta_h + \d\inn(Z)\Theta_h)
%= \int_\R \psi_o^*(\Leg \circ \rho_1^o)^*(\inn(Z)\d\Theta_h + \d\inn(Z)\Theta_h) 
= \int_\R \psi^*(\Leg_o \circ \rho_1^r)^*\Lie(Z)\Theta_{\P} \\
%&= \int_\R \psi_o^*(\inn(Y)\d\Theta_o + \d\inn(Y)\Theta_o)
&= \int_\R\psi^*\Lie(Y)\Theta_r
= \restric{\frac{d}{dt}}{t=0} \int_\R \psi^*(\sigma_t^r)^*\Theta_r \\
&= \restric{\frac{d}{dt}}{t=0} \int_\R \psi_t^*\Theta_r = 0 \, ,
\end{align*}
since $\psi$ is a critical section for the Lagrangian-Hamiltonian variational problem.

For the converse,
following the same patterns as in the theory of singular non-autonomous
first-order mechanical systems \cite{LMMMR-2002}, it can be proved that
there are holonomic sections $\psi_\Lag\colon\R\to J^{2k-1}\pi$ of the
projection $\bar\pi^{2k-1}$ such that $\psi_h = \Leg_o \circ\psi_\Lag$,
and they are solutions to the Lagrangian dynamical equations.
Then, the sections $\psi=(\psi_\Lag,\psi_h)$ are solutions to the
Lagrangian-Hamiltonian variational problem (see the proof of
Theorem \ref{prop:LagrangianVariational}).
\end{proof}

In natural coordinates, if $\psi_h$ is given by
$\psi_h(t) = (t,q_i^A(t),p_A^i(t))$, $0 \leqslant i \leqslant k-1$,
then the above equations give the classical higher-order Hamilton equations:
$$
\dot{q}_i^A = \restric{\derpar{H}{p_A^i}}{\psi_h} \quad ; \quad
\dot{p}_A^i = \restric{\derpar{H}{q_i^A}}{\psi_h} \, .
$$

Finally, as a consequence of all the above results, we have the
corresponding relation between vector field solutions to the unified dynamical equations
and those which are solutions to the Hamiltonian equations:

\begin{proposition}
Let $X \in \vf(\W_r)$ be a vector field tangent to $\W_\Lag$ and solution to the equations
\begin{equation}
\inn(X)\Omega_r = 0 \quad ; \quad \inn(X)(\rho_\R^r)^*\eta = 1 \ ,
\label{one}
\end{equation}
Then there exist vector fields $X_h \in \vf(\P)$,
which are solutions to the equations
\begin{equation}
\inn(X_h)\Omega_{\P} = 0 \quad , \quad
\inn(X_h)\bar{\tau}_o^*\eta = 1 \ .
\label{two}
\end{equation}
%If, in addition, $\Lag$ is a regular
%Lagrangian density, then $X_h$ is unique.

\noindent Conversely, if $X_h \in \vf(\P)$ is a vector field
which is a solution to the equations \eqref{two},
then there exist vector fields $X \in \vf(\W_r)$ which
are solutions to the equations \eqref{one}.
\end{proposition}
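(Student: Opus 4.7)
The plan is to factor the proof through the Lagrangian formalism developed in Section~\ref{Lagvp}, exploiting the already-established bridge between vector fields on $\W_r$ tangent to $\W_\Lag$ and semisprays on $J^{2k-1}\pi$ solving the Lagrangian equations. Composing this bridge with the restricted Legendre-Ostrogradsky map $\Leg_o\colon J^{2k-1}\pi \to \P$ should yield the desired correspondence with $\vf(\P)$, because the Hamilton-Cartan forms are built so that $\Leg_o^*\Theta_\P = \Theta_\Lag$ and $\Leg_o^*\Omega_\P = \Omega_\Lag$ (which follow from $\Leg = \jmath\circ\Leg_o$, $\Theta_\P = \jmath^*\Theta_h$, and $\Leg^*\Theta_h = \Theta_\Lag$), together with $\Leg_o^*\bar{\tau}_o^*\eta = (\bar\pi^{2k-1})^*\eta$.

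For the direct statement, I would start with $X \in \vf(\W_r)$ tangent to $\W_\Lag$ satisfying \eqref{one}. The preceding Proposition of Section~\ref{Lagvp} produces a semispray $X_\Lag \in \vf(J^{2k-1}\pi)$ with $\inn(X_\Lag)\Omega_\Lag = 0$ and $\inn(X_\Lag)(\bar\pi^{2k-1})^*\eta = 1$. Almost-regularity of $\Lag$ ensures that $\Leg_o$ is a submersion onto $\P$ with connected fibers, and the next step is to show that $X_\Lag$ is $\Leg_o$-projectable: the $\Leg_o$-vertical vector fields lie in $\ker \Omega_\Lag$ (since $\Omega_\Lag$ is pulled back from $\P$), so the pairing $\Omega_\Lag(X_\Lag, Y)$ depends only on the $\Leg_o$-projection of $Y$, which together with the connectedness of the fibers yields a well-defined projected field $X_h := (\Leg_o)_*X_\Lag \in \vf(\P)$. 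The pullback relations then translate the equations for $X_\Lag$ directly into $\inn(X_h)\Omega_\P = 0$ and $\inn(X_h)\bar{\tau}_o^*\eta = 1$.

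For the converse, given $X_h \in \vf(\P)$ solving \eqref{two}, I would use the fact that $\Leg_o$ is a submersion to choose a lift $X_\Lag \in \vf(J^{2k-1}\pi)$ with $(\Leg_o)_*X_\Lag = X_h$ (non-unique in the singular case, but existence suffices). The same pullback identities show that $X_\Lag$ satisfies the Lagrangian equations, whereupon the converse half of the Proposition in Section~\ref{Lagvp} delivers the required $X \in \vf(\W_r)$ solving \eqref{one}. The main technical obstacle is the projectability step in the almost-regular setting: in the regular case $\Leg_o$ is a local diffeomorphism and projection is automatic, but in the singular case one genuinely needs the almost-regularity hypotheses (in particular, the connectedness of the fibers of $\Leg_o$) to guarantee that the pointwise projected field is globally well-defined rather than merely fiberwise consistent.
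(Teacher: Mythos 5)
Your overall route (factor through the proposition of Section~\ref{Lagvp}, then transport the equations with the identities $\Leg_o^*\Omega_{\P}=\Omega_\Lag$ and $\Leg_o^*\bar{\tau}_o^*\eta=(\bar{\pi}^{2k-1})^*\eta$) is the natural one, and it is close in spirit to the proof the paper defers to (\cite{art:Prieto_Roman11_2}, Theorem 4). However, both steps that you yourself flag as the technical core fail in the almost-regular case, and for the same concrete reason. The proposition of Section~\ref{Lagvp} hands you a semispray of type $k$, so the $\partial/\partial q_{k-1}^A$-component of $X_\Lag$ is the coordinate function $q_k^A$. Hence the candidate projected vector $\Tan_y\Leg_o\bigl(X_\Lag(y)\bigr)$ has $\partial/\partial q_{k-1}^A$-component equal to $q_k^A(y)$; but when $\Lag$ is singular the fibres of $\Leg_o$ are positive-dimensional, with tangent directions projecting into the kernel of the Hessian $\bigl(\partial^2 L/\partial q_k^B\partial q_k^A\bigr)$, so $q_k^A$ is \emph{not} constant along them and $(\Leg_o)_*X_\Lag$ is simply not well defined. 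Connectedness of the fibres cannot rescue this: it upgrades local constancy to constancy, and you never establish local constancy. What your pairing argument actually proves is only that each $\Tan_y\Leg_o\bigl(X_\Lag(y)\bigr)$ is a \emph{pointwise} solution of \eqref{two}; in the singular case the pointwise solutions at a point of $\P$ form an affine space modelled on $\ker\Omega_{\P}\cap\ker\bar{\tau}_o^*\eta\neq 0$, and distinct fibre points do project to distinct solutions. The existence claim can still be saved, but by a different mechanism: since the pointwise solutions of \eqref{two} form an affine subbundle, set $X_h:=\Tan\Leg_o\circ X_\Lag\circ s$ on local sections $s$ of the submersion $\Leg_o$ and glue with a partition of unity (convex combinations of solutions of \eqref{two} are again solutions). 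This is precisely why the statement asserts ``there exist vector fields $X_h$'' rather than exhibiting a canonical projection.

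The converse has a gap of the same origin. The converse half of the proposition in Section~\ref{Lagvp} requires its input to be a semispray of type $k$, and this hypothesis cannot be dropped: the coordinate analysis of Section~\ref{audeq} shows that any solution of \eqref{one} automatically satisfies $f_i^A=q_{i+1}^A$, i.e.\ it is of type $k$, so only type-$k$ fields on $J^{2k-1}\pi$ can correspond to solutions upstairs. An arbitrary lift of $X_h$ through $\Leg_o$ is not of type $k$; worse, in the singular case \emph{no} lift of a prescribed $X_h$ can be, because being a lift forces the $\partial/\partial q_{k-1}^A$-component of $X_\Lag$ to be a function pulled back from $\P$, while being of type $k$ forces it to equal $q_k^A$, which is not such a pullback (again because $q_k^A$ varies along the fibres). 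This is the higher-order analogue of the Gotay--Nester second-order-differential-equation problem for singular Lagrangians, and it blocks your appeal to the cited proposition. A correct converse must bypass it, e.g.\ by constructing $X$ directly along $\W_\Lag$ through the graph identification $\rho_1^\Lag$ and the relation $\Omega_r=(\hat{\rho}_2^r)^*\Omega_h$, using that the statement only demands \emph{some} solution of \eqref{one} (in the singular case one can recover $X_h$ from $X$ only up to sections of $\ker\Omega_{\P}\cap\ker\bar{\tau}_o^*\eta$, so the correspondence holds between sets of solutions, not field by field). In the regular case your argument is fine, since there $\Leg_o$ is a local diffeomorphism and both obstructions vanish; the gaps are exactly in the singular situation you set out to cover.
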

\begin{proof}:
See also \cite{art:Prieto_Roman11_2} (Theorem 4) for a detailed proof of this statement.
\end{proof}

{\bf Remark:}
It is interesting to point out that, for almost-regular systems, if the
unified dynamical equations have consistent solutions on a final constraint submanifold
$\W_f\hookrightarrow \W_r$, then the Lagrangian and Hamiltonian equations
have consistent solutions on final constraint submanifolds
$S_f=\rho_1^r(\W_f)\hookrightarrow J^{2k-1}\pi$
and $\P_f=\hat\rho_1^r(\W_f)\hookrightarrow\P$, respectively.
Then all the above results hold on the points of these submanifolds instead
of $\W_r$, $J^{2k-1}\pi$, and $\P$, respectively.
$$
\xymatrix{
\ & \ & \ & \W_r \ar@/_1.3pc/[ddll]_{\rho_1^r} \ar@/^1.3pc/[ddrr]^{\hat{\rho}_2^r} & \ &\  \\
\ & \ & \ & \W_\Lag \ar[dll]^{\rho_1^1} \ar[drr]_{\hat{\rho}_2^1} \ar@{^{(}->}[u]^{j_L} & \ & \ \\
\ & J^{2k-1}\pi \ar[rrrr]^<(0.35){\Leg}|(.493){\hole} \ar[drrrr]_<(0.35){\Leg_o}|(.495){\hole} & \ & \ & \ & J^{k-1}\pi^* \\
\ & \ & \ & \ & \ & \mathcal{P} \ar@{^{(}->}[u]^-{\jmath} \\
\ & \ & \ & \W_f \ar@{^{(}->}[uuu] \ar[dll] \ar[drr] & \ & \ \\
 & S_f \ar@{^{(}->}[uuu] & \ & \ & \ & P_f \ar@{^{(}->}[uu] \\
}
$$

%%%%%%%%%%%%%%%%%%%%%%%%%%%%%%%%%%%%%%%%%%%%%%%%%%
%%%%%%%%%%%%%%%%%%%%%%%%%%%%%%%%%%%%%%%%%%%%%%%%%%
\section{Conclusions and further research}
\label{concl}
%%%%%%%%%%%%%%%%%%%%%%%%%%%%%%%%%%%%%%%%%%%%%%%%%%
%%%%%%%%%%%%%%%%%%%%%%%%%%%%%%%%%%%%%%%%%%%%%%%%%%

We have made an accurate revision of the generalization of the
Lagrangian-Hamiltonian unified formalism of R. Skinner and R. Rusk
to higher-order dynamical systems.
We have analyzed the non-autonomous case,
since the autonomous case can be obtained as a particular situation of this.
This particular situation consists in using trivial bundles and removing the time-dependence
(see \cite{art:Prieto_Roman11_1}).
This unified formalism constitutes a nice framework which
allows us to study different kinds of problems in a simpler way.
In particular, singular (constrained) systems can be
analyzed more easily.

In particular, as a new contribution, the classical variational principles of first-order mechanics
are generalized to this framework, in order to state the
dynamical equations for higher-order mechanics in several equivalent ways.

Therefore,  the Lagrangian and Hamiltonian structures, equations and solutions
of higher-order mechanics are recovered
from those obtained in the unified formalism,
which also includes the corresponding Lagrangian and Hamiltonian variational
principles: the generalized Hamilton and Hamilton-Jacobi Principles
respectively.

Several interesting physical examples have been studied using this formalism;
for instance the Pais-Uhlenbeck oscillator and the shape of a deformed elastic cylindrical beam
with fixed ends, as regular systems;  the second-order relativistic particle,
first as a free particle and later subjected to a potential,
as singular systems \cite{art:Prieto_Roman11_1,art:Prieto_Roman11_2},
and also underactuated control systems \cite{art:Colombo_Martin_Zuccalli10}.

This generalization of the Lagrangian-Hamiltonian unified formalism
to higher-order dynamical systems using a general fibre bundle
$E$ over $\R$ (instead of the classical approach using trivial bundles)
is a first step towards the study of higher-order classical field theory.
However, replacing the base manifold $\R$ with an orientable $m$-dimensional
manifold $M$ gives rise to new difficulties, such as
defining a suitable fiber bundle to act as the phase space; obtaining 
the Legendre map without ambiguities,
or obtaining the relation between the momenta (which is crucial in our formulation).
Nevertheless+, our future aim is to obtain an unambiguous Lagrangian-Hamiltonian
unified formalism for higher-order classical field theory,
thus completing previous works \cite{Campos-et-al,art:Vitagliano10}.

%%%%%%%%%%%%%%%%%%%%%%%%%%%%%%%%%%%%%%%%%%%%%%%%%%
%%%%%%%%%%%%%%%%%%%%%%%%%%%%%%%%%%%%%%%%%%%%%%%%%%
\section*{Acknowledgments}
%%%%%%%%%%%%%%%%%%%%%%%%%%%%%%%%%%%%%%%%%%%%%%%%%%
%%%%%%%%%%%%%%%%%%%%%%%%%%%%%%%%%%%%%%%%%%%%%%%%%%

We acknowledge the financial support of the  {\sl
Ministerio de Ciencia e Innovaci\'on} (Spain), projects
MTM2011-22585 and  MTM2011-15725-E,
and AGAUR, project 2009 SGR:1338..
One of us (PDPM) wants to thank the UPC for a Ph.D grant.
We thank Mr. Jeff Palmer for his assistance in preparing the English
version of the manuscript.

%%%%%%%%%%%%%%%%%%%%%%%%%%%%%%%%%%%%%%%%%%%%%%%%%%
%%%%%%%%%%%%%%%%%%%%%%%%%%%%%%%%%%%%%%%%%%%%%%%%%%

\end{document}